
\documentclass[journal]{IEEEtran}
%
\ifCLASSINFOpdf
  \usepackage[pdftex]{graphicx}
  \graphicspath{{./figs/}}
\else
 \usepackage[dvips]{graphicx}
 \graphicspath{{./figs/}}
\fi

\hyphenation{op-tical net-works semi-conduc-tor}

\usepackage{amsmath,graphicx,amssymb,amsthm,amsfonts,mathrsfs,cite}
\usepackage{mathtools,xcolor}
\usepackage{soul}
\usepackage{color}
\usepackage{algpseudocode,algorithm}
\usepackage{epstopdf}

\newtheorem{theorem}{Theorem}

\newtheorem{assumption}{Assumption}
\long\def\symbolfootnote[#1]#2{\begingroup%
	\def\thefootnote{\fnsymbol{footnote}}\footnote[#1]{#2}\endgroup} 

\begin{document}
\title{Optimal Feature Manipulation Attacks Against Linear Regression}

\author{\IEEEauthorblockN{Fuwei Li, Lifeng Lai, and Shuguang Cui}}

\maketitle
 \symbolfootnote[0]{
 	F. Li, L. Lai are with the Department of Electrical and Computer Engineering, University of California, Davis, CA, 95616. Email:\{fli, lflai\}@ucdavis.edu.
 	S. Cui is currently with the Shenzhen Research Institute of Big Data and Future Network of Intelligence Institute (FNii), the Chinese University of Hong Kong, Shenzhen, China, 518172, and was with the Department of Electrical and Computer Engineering, University of California at Davis, Davis, CA, USA, 95616 (e-mail: robert.cui@gmail.com).
 	The work was partially supported by the National Science Foundation with Grants CNS-1824553, CCF-1717943, and CCF-1908258.
 }

\pagestyle{plain}

\begin{abstract}
In this paper, we investigate how to manipulate the coefficients obtained via linear regression by adding carefully designed poisoning data points to the dataset or modify the original data points. Given the energy budget, we first provide the closed-form solution of the optimal poisoning data point when our target is modifying one designated regression coefficient. We then extend the analysis to the more challenging scenario where the attacker aims to change one particular regression coefficient while making others to be changed as small as possible. For this scenario, we introduce a semidefinite relaxation method to design the best attack scheme. Finally, we study a more powerful adversary who can perform a rank-one modification on the feature matrix. We propose an alternating optimization method to find the optimal rank-one modification matrix. Numerical examples are provided to illustrate the analytical results obtained in this paper.
\end{abstract}

\begin{IEEEkeywords}
Linear regression, adversarial robustness, poisoning attack, non-convex optimization.
\end{IEEEkeywords}

\section{Introduction}
Linear regression plays a fundamental role in machine learning and is used in a wide spectrum of applications \cite{yan2009linear,g2015robustlinear,beamforming2016,chien2009recursive,gusta2002stat}. In linear regression, one assume that there is a simple linear relationship between the explanatory variables and the response variable. The goal of linear regression is to find out the regression coefficients through the methods of ordinary least square (OLS), ridge regression, Lasso \cite{tibshirani1996regression}, etc. Having the regression coefficients learned from the data points, one can predict the response values given the values of the explanatory variables. The regression coefficients also help us explain the variation in the response variable that can be attributed to the variation in the explanatory variables.  It can quantify the strength of the relationship between certain explanatory variables and the response variable. Large magnitude of the regression coefficient usually indicates a strong relationship while small valued regression coefficient means a weak relationship. This is especially true when linear regression is accomplished by the parameter regularized method such as ridge regression and Lasso. In addition, the sign of the regression coefficients indicates whether the value of the response variable  increases or decreases when the value of a explanatory variable changes, which is very important in biologic science~\cite{mcdonald2009handbook}, financial analysis~\cite{barndorff2004econometric}, and environmental science~\cite{ter1993weighted}. 

Machine learning is being used in various applications, including security and safety critical applications such as medical image analysis~\cite{finlayson2019adversarial} and autonomous driving~\cite{sallab2017deep}. 
For these applications, it is important to understand the robustness of machine learning algorithms in adversarial environments~\cite{goodfellow6572explaining,kurakin2016adversarial,goodfellow2018making}. In such an environment, there may exist a malicious adversary who has the full knowledge of the machine learning system and has the ability to observe the whole data points. After seeing the data points, the adversary can add some carefully designed poisoning data points or directly modify the data points so as to corrupt the learning system or leave a backdoor in this system \cite{chen2017targeted}. 

The goal of this paper is to investigate the optimal way to attack linear regression methods. In the considered linear regression system, there exists an adversary who can observe the whole dataset and then inject carefully designed poisoning data points or directly modify the original dataset in order to manipulate the regression coefficients. The manipulated regression coefficients can later be used by the adversary as a backdoor of this learning system or mislead our interpretation of the linear regression model. For example, by changing a large magnitude regression coefficient to be small, it makes us believe that its corresponding explanatory variable is irrelevant. Similarly, the adversary can change the magnitude of a regression coefficient to a larger value to increase its importance. Furthermore, changing the sign of a regression coefficient can also lead us to misinterpret the correlation between its explanatory variable and the response variable.

Depending on the objective of the adversary and the way the adversary changes the regression coefficients, we have different problem formulations. We first consider a scenario where the adversary tries to manipulate one specific regression coefficient by adding one carefully designed poisoning data point that has limited energy budget to the dataset. 
We show that finding the optimal attack data point is equivalent to solve an optimization problem where the objective function is a ratio of two quadratic functions and the constraint is a quadratic inequality. Even though this type of problem is non-convex in general, our particular problem has a hidden convex structure. With the help of this convex structure, we further convert the optimization problem into a quadratic constrained quadratic program (QCQP). Since it is known that strong duality exists in this problem \cite{boyd2004convex}, we are able to identify its closed-from optimal solutions from its Karush-Kuhn-Tucker (KKT) conditions. 

We next consider a more sophisticated objective where the attacker aims to change one particular regression coefficient while making others to be changed as small as possible. 
We show that the problem of finding the optimal attack data point is equivalent to solving an optimization problem where the objective function is a ratio of two fourth order multivariate polynomials and the constraint is quadratic. This optimization problem is much more complex than the optimization above. We introduce a semidefinite relaxation method to solve this problem. The numerical examples show that we can find the global optimal solutions with very low relaxation order. Hence, the complexity of this method is low in practical problems.

Finally, we consider a more powerful adversary who can directly modify the existing data points in the feature matrix. Particularly, we consider a rank-one modification attack \cite{li2020trans}, where the attacker carefully designs a rank-one matrix and adds it to the existing data matrix. A rank-one modification attack is general enough to capture most of the common modifications, such as modifying one feature, deleting or adding one data point, changing one entry of the data matrix, etc. Hence, studying the rank-one modification provides us universal bounds on these kinds of attacks. By leveraging the rank-one structure, we develop an alternating optimization method to find the optimal modification matrix. We also prove that the solution obtained by the proposed optimization method is one of the critical points of the optimization problem. 


Our study is related to several recent works on the adversarial machine learning. For example, Pimentel-Alarc\'{o}n et al. studied how to add one adversarial data point in order to maximize the error of the subspace estimated by principal component \cite{pimentel2017adversarial},
Li et al. studied the adversarial robustness of subspace learning problem \cite{li2020trans}, and Alfeld et al. considered how to use poisoning data points to attack the auto regression model \cite{adregression}. The work that is most relevant to our paper is \cite{mei2015using}, where the authors develop a bi-level optimization framework to design the attack matrix. \cite{mei2015using} further proposes to use the projected gradient descent method to solve the bi-level optimization problem. However, a general bi-level problem is known to be NP hard and solving it depends on the convexity of the lower level problem. In addition, the convergence of projected gradient descent for non-convex problem is not clear. Compared with \cite{mei2015using}, we obtain the global optimal solution for the case with adding one poisoning data point, and we also prove that the proposed alternating optimization method converges to one of the critical points for the case where the attacker can perform rank-one modification attack. Furthermore, for the projected gradient descent method, different datasets may need different parameters, which makes the parameters of this algorithm hard to tune. By contrast, we provide closed-form solution for the case with adding one poisoning data point to attack one of the regression coefficients, and the designed alternating optimization method for the case of rank-one attack does not need parameter tuning. Furthermore, compared with the projected gradient descent method, our alternating optimization method provides smaller objective values, faster convergence rate, and more stable behavior. 

The remainder of this paper is organized as follows. In Section~\ref{sec:one-data-attack}, we consider the scenario where the attack add one carefully designed poisoning data point to the dataset. 
In Section~\ref{sec:atk-r1}, we investigate the rank-one attack strategy. Numerical examples are provided in Section~\ref{sec:ne} to illustrate the results we obtained in this paper. 
Finally, we provide concluding remarks in Section~\ref{sec:conclusion}. 

\section{Attacking with one adversarial data point}\label{sec:one-data-attack}

In this section, we consider the scenario where the attacker can add one carefully crafted data point to the existing dataset. We will extend the analysis to the case with more sophisticated attacks in Section~\ref{sec:atk-r1}.

\subsection{Problem formulation}
Consider a dataset with $n$ data samples, $\{y_i, \mathbf{x}_i\}_{i=1}^n$, where $y_i$ is the response variable, $\mathbf{x}_i \in \mathbb{R}^m$ is the feature vector, where each component of $\mathbf{x}_i$ represents an explanatory variable. 
In this section, we consider an adversarial setup in which the adversary first observes the the whole dataset $\{\mathbf{y},\mathbf{X}\}$, in which $\mathbf{y} := [y_1,\,y_2,\,\ldots,y_n]^\top$ and $\mathbf{X} := [\mathbf{x}_1,\,\mathbf{x}_2,\ldots,\mathbf{x}_n]^\top$, and then carefully designs an adversarial data point, $\{y_0, \mathbf{x}_0\}$, and adds it into the existing data samples. After inserting this adversarial data point, we have the poisoned dataset $\{\hat{\mathbf{y}},\hat{\mathbf{X}\}}$, where $\hat{\mathbf{y}}:=[y_0,\,y_1,\,y_2,\ldots,y_n]^\top$,  $\hat{\mathbf{X}}:=[\mathbf{x}_0,\,\mathbf{x}_1,\mathbf{x}_2,\ldots,\mathbf{x}_n]^\top$.  

From the dataset, we intend to learn a linear regression model. From the poisoned dataset, the learned model is obtained by solving
\begin{align}\label{opt:mod-reg}
 \underset{\boldsymbol{\beta}}{\text{argmin}}:  \|\hat{\mathbf{y}} - \hat{\mathbf{X}}\boldsymbol{\beta}\|^2,
\end{align}
where $\|\cdot\|$ denotes the $\ell_2$ norm for a vector and the induced $2$-norm for a matrix throughout this paper.
Let $\hat{\boldsymbol{\beta}}$ be the optimal solution to problem~\eqref{opt:mod-reg}. The goal of the adversary is to minimize some objective function, $f(\hat{\boldsymbol{\beta}})$, by carefully designing the adversarial data point. The form of $f(\hat{\boldsymbol{\beta}})$ depends on the specific goal of the attacker. For example, the attacker can try to reduce the importance of feature $i$ by setting $f(\hat{\boldsymbol{\beta}})= |\hat{\beta}_i|$, in which $\hat{\beta}_i$ is the $i$th component of $\hat{\boldsymbol{\beta}}$. Or the attacker can try to increase the importance of feature $i$ by setting $f(\hat{\boldsymbol{\beta}})= -|\hat{\beta}_i|$.  
To make the problem meaningful, in this paper, we impose the energy constraint on the adversarial data point and use the $\ell_2$ norm to measure its energy. With the objective $f(\hat{\boldsymbol{\beta}})$ and the energy constraint of the adversary data point, our problem can be formulated as 
\begin{align}\label{pro:form}
    \min_{\|[\mathbf{x}_0^\top, y_0]\|\le \eta}: &\quad f(\hat{\boldsymbol{\beta}}) \\\nonumber
    \text{s.t.}& \quad \hat{\boldsymbol{\beta}} = \underset{\boldsymbol{\beta}}{\text{argmin}}: \|\hat{\mathbf{y}} - \hat{\mathbf{X}}\boldsymbol{\beta}\|^2,
\end{align}
where $\eta$ is the energy budget. 
This is a complicated bi-level optimization problem. The objective function, $f(\hat{\boldsymbol{\beta}})$, depends on the poisoning data point, $\{\mathbf{x}_0, y_0\}$, not in a direct way, but through a lower level optimization problem. What makes this problem even harder is the complication of the objective function. 
Depending on the goal of the adversary, the objective can be in various of forms. In the following two subsections, we will discuss two important objectives and their solutions, respectively. The methods and insights obtained from these two cases could be then be extended to cases with other objectives.

\subsection{Attacking one regression coefficient}\label{sec:atk-one}
In this subsection, the goal of the adversary is to design the adversarial data point $\{y_0,\mathbf{x}_0\}$ to decrease (or increase) the importance of a certain explanatory variable. If the goal is to decrease the importance of explanatory variable $i$, we can set $f(\hat{\boldsymbol{\beta}})= |\hat{\beta}_i|$, and the optimization problem can be written as 
\begin{align} \label{opt:orig-min}
\underset{\|[\mathbf{x}_0^\top,\, y_0]\|_2 \le \eta}{\text{min}}:&\quad |\hat{\beta}_i| \\\nonumber
\text{s.t.} & \quad \hat{\boldsymbol{\beta}} = \underset{\boldsymbol{\beta}}{\text{argmin}}: \| \hat{\mathbf{y}} - \hat{\mathbf{X}}\boldsymbol{\beta}\|^2.
\end{align}


Similarly, if the goal of the adversary is to increase the importance of explanatory variable $i$, we can set $f(\hat{\boldsymbol{\beta}})= - |\hat{\beta}_i|$, and we have the optimization problem
\begin{align} \label{opt:orig-max}
\underset{\|[\mathbf{x}_0^\top,\, y_0]\| \le \eta}{\text{min}}:& \quad -|\hat{\beta}_i|\\
\text{s.t.} & \quad \hat{\boldsymbol{\beta}} = \underset{\boldsymbol{\beta}}{\text{argmin}}: \| \hat{\mathbf{y}} - \hat{\mathbf{X}}\boldsymbol{\beta}\|^2.\nonumber
\end{align}



To solve the complicated bi-level optimization problems problems~\eqref{opt:orig-min} and~\eqref{opt:orig-max}, we first solve the following two optimization problems
\begin{align}\label{opt:min-beta}
\underset{\|[\mathbf{x}_0^\top,\, y_0]\| \le \eta}{\min}&: \hat{\beta}_i \\
\text{s.t.}& \quad \hat{\boldsymbol{\beta}} = \underset{\boldsymbol{\beta}}{\min}: \|\hat{\mathbf{y}}- \hat{\mathbf{X}}\boldsymbol{\beta} \|^2,
\label{opt:bilevel-2}
\end{align}

and 
\begin{align}\label{opt:obj-max}
\underset{\|[\mathbf{x}_0^\top,\, y_0]\| \le \eta}{\max}&: \hat{\beta}_i \\
\text{s.t.}& \quad \hat{\boldsymbol{\beta}} = \underset{\boldsymbol{\beta}}{\min}: \|\hat{\mathbf{y}}- \hat{\mathbf{X}}\boldsymbol{\beta} \|^2.
\end{align}
It is easy to check that the solutions to problems~\eqref{opt:orig-min} and \eqref{opt:orig-max} can be obtained from the solutions to problem~\eqref{opt:min-beta} and~\eqref{opt:obj-max}. In particular, let $(\hat{\beta}_i^*)_{\min}$ and $(\hat{\beta}_i^*)_{\max}$ be optimal values of problem~\eqref{opt:min-beta} and \eqref{opt:obj-max} respectively. Then, if $\hat{\beta}_i \ge 0$, we can check that $\max\{0, (\hat{\beta}_i^*)_{\min}\}$ and $\max\{|(\hat{\beta}_i^*)_{\min}|,\, |(\hat{\beta}^*_i)_{\max}|\}$ are the solutions to problem~\eqref{opt:orig-min} and \eqref{opt:orig-max} respectively. Similar arguments can be made if $\hat{\beta}_i< 0$.  

In the following, we will focus on solving the minimization problem~\eqref{opt:min-beta}. The solution to the maximization problem~\eqref{opt:obj-max} can be obtained by using a similar approach. To solve this bi-level optimization problem, we can first solve the optimization problem in the subjective. Problem~\eqref{opt:bilevel-2} is just an ordinary least squares problem, which has a simple closed-form solution: $\hat{\boldsymbol{\beta}} = (\hat{\mathbf{X}}^\top\hat{\mathbf{X}})^{-1}\hat{\mathbf{X}}^\top \hat{\mathbf{y}}$. Substitute in $\hat{\mathbf{X}} = [\mathbf{x}_0,\mathbf{X}^\top]^\top$ and $\hat{\mathbf{y}} =[y_0,\mathbf{y}^\top]^\top$, and we have 
\begin{align*}
    \hat{\boldsymbol{\beta}} = (\mathbf{X}^\top\mathbf{X}+\mathbf{x}_0 \mathbf{x}_0^\top)^{-1}[\mathbf{x}_0,\mathbf{X}^\top][y_0,\mathbf{y}^\top]^\top.
\end{align*}
According to the Sherman-Morrison formula~\cite{horn2012matrix}, we have 
\begin{eqnarray}
  && \hspace{-15mm} (\mathbf{X}^\top\mathbf{X}+ \mathbf{x}_0\mathbf{x}_0^\top)^{-1} \nonumber\\
    &&= (\mathbf{X}^\top\mathbf{X})^{-1}  -  \frac{(\mathbf{X}^\top \mathbf{X})^{-1} \mathbf{x}_0\mathbf{x}_0^\top(\mathbf{X}^\top \mathbf{X})^{-1}}{1+\mathbf{x}_0^\top (\mathbf{X}^\top\mathbf{X})^{-1} \mathbf{x}_0}.\nonumber
\end{eqnarray}
The inverse of $\mathbf{X}^\top\mathbf{X}+\mathbf{x}_0\mathbf{x}_0^\top$ always exists because $1+\mathbf{x}_0^\top (\mathbf{X}^\top\mathbf{X})^{-1} \mathbf{x}_0 \neq 0$. Plug this inverse in the expression of $\hat{\boldsymbol{\beta}}$, we get 
\begin{align}
    \hat{\boldsymbol{\beta}} 
    = \boldsymbol{\beta}_0 + \frac{\mathbf{A}\mathbf{x}_0(y_0-\mathbf{x}_0^\top\boldsymbol{\beta}_0)}{1+\mathbf{x}_0^\top\mathbf{A}\mathbf{x}_0}, \label{eq:beta_hat}
\end{align}
where 
\begin{eqnarray}
\mathbf{A} &=& (\mathbf{X}^\top\mathbf{X})^{-1}, \label{eq:A}\\
\boldsymbol{\beta}_0&=& ({\mathbf{X}}^\top{\mathbf{X}})^{-1}{\mathbf{X}}^\top {\mathbf{y}}.\label{eq:beta0}
\end{eqnarray} 
We can observe that $\boldsymbol{\beta}_0$ is the coefficient that is obtained from the clean data. Problem~\eqref{opt:min-beta} is equivalent to 
\begin{align}\label{opt:r-q-orig}
    \underset{\mathbf{x}_0,y_0}{\min}&: \frac{\mathbf{a}^\top \mathbf{x}_0(y_0 - \mathbf{x}_0^\top\boldsymbol{\beta}_0)}{1+\mathbf{x}_0^\top\mathbf{A}\mathbf{x}_0} \\
    \text{s.t.}& \quad \|[\mathbf{x}_0^\top,\, y_0] \| \le \eta,
\end{align}
where $\mathbf{a}$ is the $i$th column of $\mathbf{A}$. The optimization problem~\eqref{opt:r-q-orig} is the ratio of two quadratic functions with a quadratic constraint. To further simplify this optimization problem, we can write our objective and subjective in a more compact form by performing variable change: $\mathbf{u} = [\mathbf{x}_0^\top,\, y_0]^\top$. Using this compact representation, the optimization problem~\eqref{opt:r-q-orig} can be written as
\begin{align} \label{opt:qcrq-orig}
    \underset{\mathbf{u}}{\min:}&\quad 
    \frac{\frac{1}{2}\mathbf{u}^\top \mathbf{H} \mathbf{u}}
    {1+\mathbf{u}^\top
    \begin{bsmallmatrix}
     &\mathbf{A} &\mathbf{0}\\
     &\mathbf{0} & 0
    \end{bsmallmatrix}
    \mathbf{u}} \\\nonumber
    \text{s.t.}&\quad \mathbf{u}^\top \mathbf{u} \le \eta^2,
\end{align}
in which
\begin{align}\label{eq:H}
    \mathbf{H} = 
    \begin{bmatrix}
    -\mathbf{a}\boldsymbol{\beta}_0^\top -\boldsymbol{\beta}_0\mathbf{a}^\top & \mathbf{a}\\
    \mathbf{a}^\top & 0 
    \end{bmatrix}. 
\end{align}

~\eqref{opt:qcrq-orig} is a non-convex optimization problem. To solve this problem, we employ the technique introduced in~\cite{beck2010minimizing}. We first perform variable change $\mathbf{u} = \frac{\mathbf{\mathbf z}}{s}$ by introducing variable $\mathbf{z}$ and scalar $s$. Inserting this into problem~\eqref{opt:qcrq-orig}, adding constraint $1$ to the denominator of the objective and moving it to the subjective, we have a new optimization problem
\begin{align} \label{opt:qcrq}
    \underset{\mathbf{z},s}{\min}:\quad& 
    \frac{1}{2}\mathbf{z}^\top \mathbf{H} \mathbf{z}\\
    \text{s.t.}\quad &
    s^2+\mathbf{z}^\top \begin{bsmallmatrix}
     &\mathbf{A}  & \mathbf{0}\\
     &\mathbf{0}  & 0
    \end{bsmallmatrix}
    \mathbf{z}=1, \label{eq:s}\\
    &\mathbf{z}^\top \mathbf{z} \le s^2\eta^2\label{ineq:z}.
\end{align}
To validate the equivalence between problem~\eqref{opt:qcrq-orig} and \eqref{opt:qcrq}, we only need to check if the optimal value of problem~\eqref{opt:qcrq-orig} is less than the optimal value of problem~\eqref{opt:qcrq} when $s=0$~\cite{beck2010minimizing}. Firstly, since $\mathbf{H}$ is not positive semi-definite (which will be shown later), the optimal value of problem~\eqref{opt:qcrq-orig} is less than zero. Secondly, when $s=0$, the optimal value of problem~\eqref{opt:qcrq} is zero, which is apparently larger than the optimal value of problem~\eqref{opt:qcrq-orig}. Therefore, the two problems are equivalent. 

To solve problem~\eqref{opt:qcrq}, we substitute $s^2$ in equation~\eqref{eq:s} for that in equation~\eqref{ineq:z} and obtain
\begin{align}\label{opt:qcrq-simp}
    \underset{\mathbf{z}}{\min}:\quad& 
    \frac{1}{2}
    \mathbf{z}^\top \mathbf{H} \mathbf{z}\\
    \text{s.t.}\quad    &\mathbf{z}^\top\left( \mathbf{I}+\eta^2
    \begin{bsmallmatrix}
     &\mathbf{A} &\mathbf{0}\\
     &\mathbf{0} & 0
    \end{bsmallmatrix}
    \right)\mathbf{z}\le\eta^2.\label{ineq:z2}
\end{align}
Note that $\mathbf{H}$ is not positive semi-definite; hence problem~\eqref{opt:qcrq-simp} is not a standard convex QCQP problem~\cite{boyd2004convex}. However, it is proved that strong duality holds for this type of problem~\cite{boyd2004convex,konar2017fast}. Hence, to solve this problem, we can start by investigating its KKT necessary conditions. The Lagrangian of problem~\eqref{opt:qcrq-simp} is
\begin{align*}
    \mathcal{L}(\mathbf{z}, \lambda) =
    \frac{1}{2}
    \mathbf{z}^\top \mathbf{H} \mathbf{z} 
    +\lambda \left(
    \mathbf{z}^\top\left( \mathbf{I}+\eta^2
    \begin{bsmallmatrix}
     &\mathbf{A} &\mathbf{0}\\
     &\mathbf{0} & 0
    \end{bsmallmatrix}
    \right)\mathbf{z} - \eta^2
    \right),
\end{align*}
where $\lambda$ is the dual variable.
According to the KKT conditions, we have 
\begin{align}
     \left(
    \mathbf{H}  + \lambda \mathbf{D} \right)\mathbf{z} &=\mathbf{0}, \label{kkt:stationary}\\
    \frac{1}{2}\mathbf{z}^\top \mathbf{D} \mathbf{z}  &\le \eta^2, \label{kkt:prim}\\
    \lambda \left(
    \frac{1}{2}\mathbf{z}^\top \mathbf{D} \mathbf{z} - \eta^2 \right) & =  0, \label{kkt:slack}\\
    \lambda &\ge 0, \label{kkt:dual}
\end{align}
where 
\begin{align}\label{eq:D}
    \mathbf{D} = 2
    \left( 
    \mathbf{I}+\eta^2
    \begin{bmatrix}
     \mathbf{A} &\mathbf{0}\\
     \mathbf{0} & 0
    \end{bmatrix}
    \right). 
\end{align}

By inspecting the complementary slackness condition~\eqref{kkt:slack}, we consider two cases based on the value of $\lambda$. \\
\noindent
\textbf{Case 1}: $\lambda =0$. In this case, we must have $\mathbf{H}\mathbf{z}=\mathbf{0}$. As a result, the objective value of~\eqref{opt:qcrq-simp} is zero, which contradicts the fact that the optimal value should be negative. Hence, this case is not possible.  \\
\noindent
\textbf{Case 2}: $\lambda > 0 $. In this case, equality in~\eqref{kkt:prim} must hold. According to the stationary condition~\eqref{kkt:stationary}, if the matrix $\mathbf{H}+\lambda\mathbf{D}$ is full rank, we must have $\mathbf{z}=\mathbf{0}$, for which equality in~\eqref{kkt:prim} cannot hold. Hence, $\mathbf{H}+\lambda \mathbf{D}$ is not full-rank and we have
$\det(\mathbf{H}+\lambda\mathbf{D}) = 0$.
As $\mathbf{D}$ is positive definite, we also have
\begin{align}
    \det(\mathbf{D}^{-1/2}\mathbf{H}\mathbf{D}^{-1/2}+\lambda \mathbf{I}) = 0.
\end{align}
Since $\lambda > 0$, this equality tells us that $-\lambda$ belongs to one of the negative eigenvalues of $\mathbf{D}^{-1/2}\mathbf{H}\mathbf{D}^{-1/2}$. In the following, we will show that $\mathbf{D}^{-1/2}\mathbf{H}\mathbf{D}^{-1/2}$ has one and only one negative eigenvalue. 

By definition, $\mathbf{D}$ is a block diagonal matrix. Hence, its inverse is also block diagonal. Let us define
$\mathbf{D}^{-1/2} = \text{diag}\{\mathbf{G}, g\}$,
where  $\mathbf{G} = 1/\sqrt{2}(\mathbf{I}+\eta^2\mathbf{A})^{-1/2}$ and $g = 1/\sqrt{2}$. Thus, we have 
\begin{align*}
    \mathbf{D}^{-1/2}\mathbf{H}\mathbf{D}^{-1/2} 
    = 
    \begin{bmatrix}
    &-\mathbf{c}\mathbf{h}^\top - \mathbf{h}\mathbf{c}^\top & g\mathbf{c} \\
    &g\mathbf{c}^\top & 0 
    \end{bmatrix},
\end{align*}
where $\mathbf{c} = \mathbf{G}\mathbf{a}$ and $\mathbf{h}= \mathbf{G}\boldsymbol{\beta}_0$. Define $\xi$ as the eigenvalue of $\mathbf{D}^{-1/2}\mathbf{H}\mathbf{D}^{-1/2}$, and compute its eigenvalues by computing the characteristic polynomial: 
\begin{align*}
     &\det\left(\xi\mathbf{I}-\mathbf{D}^{-1/2}\mathbf{H}\mathbf{D}^{-1/2}\right)\\
     &= \xi^{m-1}\left(
    \xi^2+2\xi \mathbf{c}^\top\mathbf{h} + \mathbf{c}^\top\mathbf{h}\mathbf{h}^\top\mathbf{c} - g^2\mathbf{c}^\top\mathbf{c} - \mathbf{c}^\top\mathbf{c}\mathbf{h}^\top\mathbf{h}
    \right).
\end{align*}
Thus, the eigenvalues of $\mathbf{D}^{-1/2}\mathbf{H}\mathbf{D}^{-1/2}$ are $\xi =0$ ((m-1) multiplications) and $\xi = -\mathbf{c}^\top\mathbf{h} \pm \|\mathbf{c}\| \sqrt{g^2+\mathbf{h}^\top\mathbf{h}}$. Since $\|\mathbf{c}\|\sqrt{g^2 + \mathbf{h}^\top \mathbf{h}} > |\mathbf{c}^\top \mathbf{h}|$, the eigenvalues of $\mathbf{D}^{-1/2}\mathbf{H}\mathbf{D}^{-1/2}$ satisfy: 
$\xi_{m+1}<0,\quad \xi_m=\xi_{m-1}=\cdots=\xi_2=0,\quad  \xi_1 >0$.
Now, it is clear that $\mathbf{D}^{-1/2}\mathbf{H}\mathbf{D}^{-1/2}$ has one and only one negative eigenvalue and one positive eigenvalue, respectively. Thus, we have $\lambda = - \xi_{m+1}$. Assume $\boldsymbol{\nu}_1$ and $\boldsymbol{\nu}_{m+1}$  are two eigenvectors corresponding to eigenvalues $\xi_1$ and $\xi_{m+1}$. Through simple calculation, we have 
\begin{align} \label{v:eig}
    \boldsymbol{\nu}_i = k_i
    \left[
    -\frac{\mathbf{c}^\top \mathbf{h}+\xi_i}{\mathbf{c}^\top\mathbf{c}}\mathbf{c}^\top+\mathbf{h}^\top, 
    \frac{g\mathbf{c}^\top}{\xi_i}\left(-\frac{\mathbf{c}^\top \mathbf{h}+\xi_i}{\mathbf{c}^\top\mathbf{c}}\mathbf{c} +\mathbf{h}\right )
    \right]^\top,
\end{align}
where $i = 1,\,m+1$ and scalar $k_{i}$ is the normalization constant to guarantee the eigenvectors to be of unit length.  According to \eqref{kkt:stationary}, we have
\begin{align*}
    \left(\mathbf{H}+\lambda\mathbf{D}\right)\mathbf{z}=
    \mathbf{D}^{1/2}\left(\mathbf{D}^{-1/2}\mathbf{H}\mathbf{D}^{-1/2}+\lambda\mathbf{I}\right)\mathbf{D}^{1/2}\mathbf{z}=0;
\end{align*}
thus the solution to problem~\eqref{opt:qcrq-simp} is 
\begin{align}
    \mathbf{z}^* = k \cdot \mathbf{D}^{-1/2}\boldsymbol{\nu}_{m+1}.
    \label{sol:z}
\end{align}
Since $\frac{1}{2} \mathbf{z}^\top \mathbf{D}\mathbf{z} = \eta^2$, we have $k=\sqrt{2}\eta$.
Having the expression of the optimal $\mathbf{z}^*$, we can then compute $s$ according to equation~\eqref{eq:s}: 
\begin{align}\label{sol:s}
    s = \pm
    \sqrt{
    1 - (\mathbf{z}_{1:m}^*)^\top
    \mathbf{A}\,
    \mathbf{z}_{1:m}^*
    },
\end{align}
where $\mathbf{z}_{1:m}^*$ is the vector that comprises the first $m$ elements of $\mathbf{z}^*$. 
Hence, the corresponding solution to problem~\eqref{opt:r-q-orig} is 
\begin{align} \label{sol:xy}
    \mathbf{x}_0^* = \mathbf{z}_{1:m}^*/s,\quad 
    y_0^* = {z}_{m+1}^*/s.
\end{align}

We now compute the optimal value of problem~\eqref{opt:qcrq}. Since our objective function is $\frac{1}{2}(\mathbf{z}^*)^\top\mathbf{H}\mathbf{z}^*$, substituting $\mathbf{z}^*$ in \eqref{sol:z} leads to the objective value:
$\eta^2\boldsymbol{\nu}_{m+1}^\top \mathbf{D}^{-1/2}\mathbf{H}\mathbf{D}^{-1/2}\boldsymbol{\nu}_{m+1}.$
Since $\boldsymbol{\nu}_{m+1}^\top \mathbf{D}^{-1/2}\mathbf{H}\mathbf{D}^{-1/2}\boldsymbol{\nu}_{m+1} = \xi_{m+1}$, our optimal objective value is $\eta^2\xi_{m+1}$.

\begin{algorithm}[t!]
	\caption{Optimal Adversarial Data Point Design}\label{alg:ad-attack}
	\begin{algorithmic}[1]
		\State \textbf{Input}: the data set, $\{y_i, \mathbf{x}_i\}_{i=1}^n$, energy budget $\eta$, and the index of feature to be attacked. 
		\State \textbf{Steps}:
		\State compute $\mathbf{A}$ according to equation~\eqref{eq:A}, compute $\boldsymbol{\beta}_0$ according to~\eqref{eq:beta0}. 
		\State compute $\mathbf{H}$ and $\mathbf{D}$ according to~\eqref{eq:H} and~\eqref{eq:D}, respectively. 
		
		\State  compute the last eigenvalue, $\xi_{m+1}$, of $\mathbf{D}^{-1/2}\mathbf{H}\mathbf{D}^{-1/2}$ and its corresponding eigenvector according to~\eqref{v:eig}.
		\State design the adversarial data point, $\{\mathbf{x}_0, y_0\}$, according to equations~\eqref{sol:z}, \eqref{sol:s}, and \eqref{sol:xy}. 
		\State \textbf{Output}: return the optimal adversarial data point $\{\mathbf{x}_0, y_0\}$ and the optimal value $\eta^2\xi_{m+1}+(\boldsymbol{\beta}_0)_i$.
	\end{algorithmic}
\end{algorithm}

Following similar analysis as above, we can find the optimal $\mathbf{z}^*$ for problem~\eqref{opt:obj-max}, which is $\mathbf{z}^* = \sqrt{2}\eta\mathbf{D}^{-1/2}\boldsymbol{\nu}_1$. Also, we can compute the optimal $\mathbf{x}_0^*$ and $y_0^*$ according to equation~\eqref{sol:xy} and its optimal objective value, which is $\eta^2\xi_1$. 

In summary, the optimal values for problems~\eqref{opt:min-beta} and \eqref{opt:obj-max} are $\eta^2\xi_{m+1}+(\boldsymbol{\beta}_0)_i$ and $\eta^2\xi_1+(\boldsymbol{\beta}_0)_i$ respectively. We have summarized the process to design the optimal adversarial data point in Algorithm~\ref{alg:ad-attack} with respect to objective~\eqref{opt:min-beta} and the process with respect to objective~\eqref{opt:obj-max} can be obtained accordingly. 
Based on our optimal values of problems~\eqref{opt:min-beta} and \eqref{opt:obj-max}, we can further decide the optimal values of problems~\eqref{opt:orig-min} and \eqref{opt:orig-max} as discussed at the beginning of this section.

Moreover, if we use the ridge regression method in linear regression, there is only a slight difference in the matrix $\mathbf{A}$ in problem~\eqref{opt:r-q-orig} and the whole analysis remains the same.

\subsection{Attacking with small changes of other regression coefficients}\label{sec:atk-multi}
In Section~\ref{sec:atk-one}, we have discussed how to design the adversarial data points to attack one specific regression coefficient in order to enhance or reduce the importance of its corresponding explanatory variable. However, as we only focus on one particular regression coefficient, other regression coefficients may also be changed by the attack sample. In this subsection, we consider a more complex objective function, where we aim to make the changes to other regression coefficients to be as small as possible while attacking one of the regression coefficients.  

Suppose our objective is to minimize the $i$th regression coefficient (the scenario of maximize the $i$th regression coefficient can be solved using similar approach), i.e., to minimize $\|\hat{\beta}_i\|^2$. At the same time, we would also like to minimize the changes to the rest of the regression coefficients, i.e., to minimize $\|\boldsymbol{\beta}_0^{-i}- \hat{\boldsymbol{\beta}}^{-i}\|^2$, where $\boldsymbol{\beta}_0^{-i}$ contains the original regression coefficients except its $i$th element and $\hat{\boldsymbol{\beta}}^{-i}$ consists of all the elements of the regression coefficients excluding the $i$th one after we insert the adversarial data sample. Combine the two objectives, we have our new objective function $$f(\hat{\boldsymbol{\beta}})=\frac{1}{2}\left\|\boldsymbol{\beta}_0^{-i} - \hat{\boldsymbol{\beta}}^{-i}\right\|^2 + \frac{\lambda}{2}\left\|\hat{\beta}_i\right\|^2 ,$$ where $\lambda$ is the trade-off parameter. The larger the $\lambda$ is, the more effort will be made to keep the $i$th regression coefficient small. Similarly, a negative $\lambda$ means the adversary attempts to make the magnitude of the $i$th regression coefficient large. Again, we assume that the attack energy budget is $\eta$. As the result, we have the following optimization problem
\begin{align}\label{prob:gener-orig}
    \underset{\left\|[\mathbf{x}_0^\top, y_0]\right\|\le \eta }{\text{min}}: 
    & \quad \frac{1}{2}\left\|\boldsymbol{\beta}_0^{-i} - \hat{\boldsymbol{\beta}}^{-i}\right\|^2 + \frac{\lambda}{2}\left\|\hat{\beta}_i\right\|^2 \\ \nonumber
    \text{s.t.}&\quad 
    \hat{\boldsymbol{\beta}} = \underset{\boldsymbol{\beta}}{\text{argmin}}: \|\hat{\mathbf{y}} - \hat{\mathbf{X}}\boldsymbol{\beta} \|^2.
\end{align}
As the objective function is a quadratic function with respect to $\hat{\boldsymbol{\beta}}$, we can write it in a more compact form: $\frac{1}{2}(\hat{\boldsymbol{\beta}}-\mathbf{d})^\top \mathbf{\Lambda} (\hat{\boldsymbol{\beta}}-\mathbf{d})$, where $\mathbf{d}=[\beta_0^1, \beta_0^2, \ldots, \beta_0^{i-1}, 0, \beta_0^{i+1}, \ldots, \beta_0^m]^\top$ and $\mathbf{\Lambda}=\text{diag}(1, 1,\ldots, \lambda, \ldots, 1)$. With this compact form, our optimization problem can be written as \begin{align}\label{prob:quad}
    \underset{\left\|[\mathbf{x}_0^\top, y_0]\right\|\le \eta }{\text{min}}: 
    & \quad \frac{1}{2}(\hat{\boldsymbol{\beta}} - \mathbf{d} )^\top \mathbf{\Lambda}  (\hat{\boldsymbol{\beta}} - \mathbf{d} ) \\ \nonumber
    \text{s.t.}&\quad 
    \hat{\boldsymbol{\beta}} = \underset{\boldsymbol{\beta}}{\text{argmin}}: \|\hat{\mathbf{y}} - \hat{\mathbf{X}}\boldsymbol{\beta} \|^2.
\end{align}
To solve this problem, same as in the previous subsection, we start by solving the lower level optimization problem. Since we have the same lower level problem as in \eqref{opt:min-beta}, substitute $\hat{\boldsymbol{\beta}}$ in the objective with the expression~\eqref{eq:beta_hat}, and we have the one level optimization problem
\begin{align*} 
    \underset{\mathbf{x}_0, y_0}{\min}:& \quad \frac{1}{2} 
    \mathbf{g}^\top 
    \mathbf{\Lambda}
    \mathbf{g} \\\nonumber
    \text{s.t.}& \quad \left\|[\mathbf{x}_0^\top, y_0] \right\| \le \eta,
\end{align*}
where 
$
    \mathbf{g} = \frac{\mathbf{A}\mathbf{x}_0(y_0 -\mathbf{x}_0^\top \boldsymbol{\beta}_0)}{1+\mathbf{x}_0^\top \mathbf{A}\mathbf{x}_0} - \mathbf{b},
$
and $\mathbf{b}=\mathbf{d} - \boldsymbol{\beta}_0$ with $\mathbf{A}$ and $\boldsymbol{\beta}_0$ defined in~\eqref{eq:A} and~\eqref{eq:beta0} respectively. To further simplify our problem, let us define
\begin{align}
    \mathbf{A}_1 = [\mathbf{A}, \mathbf{0}], \,
    \mathbf{A}_2 = 
    \begin{bmatrix}
    \mathbf{A} & \mathbf{0} \\
    \mathbf{0} & 0
    \end{bmatrix},\, 
    \mathbf{c} = 
    \begin{bmatrix}
    -\boldsymbol{\beta}_0 \\
    1
    \end{bmatrix}, \,
    \mathbf{z} =
    \begin{bmatrix}
    \mathbf{x}_0 \\
    y_0
    \end{bmatrix},
    \label{eqs:a1a1}
\end{align}
where $\mathbf{A}_1 \in \mathbb{R}^{m\times (m+1)}$ and $\mathbf{A}_2 \in \mathbb{R}^{(m+1)\times(m+1)}$. With the new defined variables, we can write our problem more compactly as:
\begin{align} 
\label{opt:s23}
    \underset{\mathbf{z}}{\min}:& \quad \frac{1}{2}
    \left( \frac{\mathbf{A}_1\mathbf{z}\mathbf{c}^\top\mathbf{z}}{1+\mathbf{z}^\top \mathbf{A}_2\mathbf{z}} - \mathbf{b} \right)^\top \mathbf{\Lambda} 
     \left( \frac{\mathbf{A}_1\mathbf{z}\mathbf{c}^\top\mathbf{z}}{1+\mathbf{z}^\top \mathbf{A}_2\mathbf{z}} - \mathbf{b} \right) \\ \nonumber
    \text{s.t.}& \quad \|\mathbf{z}\| \le \eta. 
\end{align}
Since the objective is a ratio of two quartic functions, similar to the process we carried out from \eqref{opt:qcrq-orig} to \eqref{opt:qcrq}, we perform variable change $\mathbf{z} = \frac{\mathbf{w}}{s}$ by introducing the new variable $\mathbf{w}$ and scalar $s$. Insert it into problem~\eqref{opt:s23} and follow the same argument we have made to transform problem~\eqref{opt:qcrq-orig} to problem~\eqref{opt:qcrq}, problem~\eqref{opt:s23} is equivalent to the following problem
\begin{align}\label{opt:ws}
    \underset{\mathbf{w}, s}{\min}:& \quad 
    \frac{1}{2}
    \left( \mathbf{A}_1\mathbf{w}\mathbf{c}^\top \mathbf{w} - 
    \mathbf{b} \right)^\top 
    \mathbf{\Lambda}
    \left( \mathbf{A}_1\mathbf{w}\mathbf{c}^\top \mathbf{w} - 
    \mathbf{b} \right) \\
    \text{s.t.} &\quad 
    (s^2 + \mathbf{w}^\top \mathbf{A}_2\mathbf{w})^2 = 1, \\
    & \quad  \mathbf{w}^\top \mathbf{w} \le s^2 \eta^2. \label{ineq:ws}
\end{align}
According to the definition of $\mathbf{A}_2$, it is positive semidefinite. Hence, we have $s^2 =1- \mathbf{w}^\top \mathbf{A}_2\mathbf{w} $. Plug in the expression of $s^2$ into \eqref{ineq:ws}, the constraints in problem~\eqref{opt:ws} can be simplified to $\mathbf{w}^\top (\mathbf{I}+\eta^2\mathbf{A}_2)\mathbf{w} \le \eta^2$. Assume $\mathbf{I}+\eta^2\mathbf{A}_2 = \mathbf{U}^\top\mathbf{U}$ is the Cholesky decomposition of $\mathbf{I}+\eta^2\mathbf{A}_2$. Define $\mathbf{H} = \mathbf{A}_1\mathbf{U}^{-1}$, $\mathbf{e} = \mathbf{U}^{-\top}\mathbf{c}$, and $\mathbf{x} = \mathbf{U}\mathbf{w}$, we can simplify problem~\eqref{opt:ws} further as:
\begin{align} \label{opt:s2final}
    \min_{\mathbf{x}}:& \quad 
    \frac{1}{2}\left( \mathbf{H}\mathbf{x}\mathbf{e}^\top\mathbf{x} - \mathbf{b}\right)^\top 
    \mathbf{\Lambda}
    \left( \mathbf{H}\mathbf{x}\mathbf{e}^\top\mathbf{x} - \mathbf{b}\right) \\\nonumber
    \text{s.t.}&\quad 
    \mathbf{x}^\top\mathbf{x} \le \eta^2. 
\end{align}

\begin{algorithm}[t!]
	\caption{Optimal Adversarial Data Point Design while Making Small Changes to Other Regression Coefficients }\label{alg:poly-attack}
	\begin{algorithmic}[1]
		\State \textbf{Input}: the data set, $\{y_i, \mathbf{x}_i\}_{i=1}^n$, energy budget $\eta$, and the index of feature to be attacked, the trade-off parameter $\lambda$. 
		\State \textbf{Steps}:
		\State compute $\mathbf{A}$ according to equation~\eqref{eq:A}, compute $\boldsymbol{\beta}_0$ according to~\eqref{eq:beta0}, compute $\mathbf{A}_2$ according to~\eqref{eqs:a1a1}. 
		\State follow the steps~\eqref{prob:gener-orig},~\eqref{prob:quad},~\eqref{opt:s23}, and~\eqref{opt:ws}, and formulate our problem as a polynomial optimization problem~\eqref{opt:s2final}.
		
		\State  use Lasserre's relaxation method to solve problem~\eqref{opt:s2final} and get the optimal solution $\mathbf{x}^*$ and optimal value $p^*$. 
		\State compute $\mathbf{w}^* = \mathbf{U}^\top\mathbf{x}^*$, where $\mathbf{I} +\eta^2\mathbf{A}_2 = \mathbf{U}\mathbf{U}^\top$. 
		\State compute $s^*=\pm\sqrt{ 1-(\mathbf{w}^*)^\top\mathbf{A}_2\mathbf{w}^*}$. 
		\State calculate the optimal solution $\mathbf{x}_0^* = \mathbf{w}_{1:m}^*/s^*, y_0^* =w_{m+1}^*/s^*$. 
		\State \textbf{Output}: return the optimal adversarial data point $\{y_0^*, \mathbf{x}_0^* \}$ and the optimal value $p^*$. 
	\end{algorithmic}
\end{algorithm}

This is an optimization problem with a quartic objective function and with quadratic constraint. 
Recent progress in multivariate polynomial optimization has made it possible to solve this problem using the sum of squares technology \cite{lasserre2001global,laurent2009sums,weisser2018sparse,wainwright2006log}. This method finds the global optimal solutions by solving a sequence of convex linear matrix inequality problems. Even though this sequence might be infinitely long, in practice, a very short sequence is enough to guarantee its global optimality. Hence, in this subsection, we will resort to Lasserre's relaxation method \cite{lasserre2001global}. In Appendix~\ref{app:poly}, we briefly discuss how to use this method to solve~\eqref{opt:s2final}. Algorithm~\ref{alg:poly-attack} summarizes the process to design the adversarial data point.
Numerical examples using this method to solve our problem with real data will be provided in Section \ref{sec:ne}.  

\section{Rank-one attack analysis}\label{sec:atk-r1}
In Section~\ref{sec:one-data-attack}, we have discussed how to design one adversarial data point to attack the regression coefficients. In this section, we consider a more powerful adversary who can modify the whole dataset in order to attack the regression coefficients. In particular, we will consider a rank-one attack on the feature matrix \cite{li2020trans}. This type of attack covers many useful scenarios, for example, modifying one entry of the feature matrix, deleting one feature, changing one feature or replacing one feature etc. Hence, the analysis of the rank-one attack provides a universal bound for all of these kinds of modifications. 
Specifically, we will consider the objective in problem~\eqref{opt:orig-min} and~\eqref{opt:orig-max} where the adversary attacks one particular regression coefficient. In the following, we will first formulate our problem and then provide our alternative optimization method to solve this problem. 


 In the considered rank one attack model, the attacker will carefully design a rank-one feature modification matrix $\mathbf{\Delta}$ and add it to the original feature matrix $\mathbf{X}$. As the result, the modified feature matrix is $\hat{\mathbf{X}} = \mathbf{X} + \mathbf{\Delta}$. As $\mathbf{\Delta}$ has rank one, we can write $\mathbf{\Delta}= \mathbf{c}\mathbf{d}^\top$. Similar to the previous section, we restrict the adversary to having constrained energy budget, $\eta$. Here, we use the Frobenius norm  to measure the energy of the modification matrix. Hence, we have $\|\mathbf{\Delta}\|_{\text{F}} \le \eta $. If the attacker's goal is to increase the importance of feature $i$, our problem can be written as 
\begin{align}\label{opt:r1-max}
    \max_{\|\mathbf{c}\mathbf{d}^\top\|_{\text{F}}\le \eta}:& \quad|\hat{\beta}_i| \\\nonumber
    \text{s.t.}& \quad \hat{\boldsymbol{\beta}} = \underset{\boldsymbol{\beta}}{\text{argmin}}\, \|\mathbf{y} - \hat{\mathbf{X}}\boldsymbol{\beta} \|^2, \\\nonumber
    &\quad \hat{\mathbf{X}} = \mathbf{X}+ \mathbf{c}\mathbf{d}^\top.
\end{align}
If the adversary is trying to minimize the magnitude of the $i$th regression coefficient, our problem is 
\begin{align}\label{opt:r1-min}
    \min_{\|\mathbf{c}\mathbf{d}^\top\|_{\text{F}}\le \eta}:& \quad|\beta_i| \\ \nonumber
    \text{s.t.}& \quad \hat{\boldsymbol{\beta}} = \underset{\boldsymbol{\beta}}{\text{argmin}}:\, \|\mathbf{y} - \hat{\mathbf{X}}\boldsymbol{\beta} \|^2, \\\nonumber
    &\quad \hat{\mathbf{X}} = \mathbf{X}+ \mathbf{c}\mathbf{d}^\top.
\end{align}
Similar to Section~\ref{sec:atk-one},  the solutions to problems~\eqref{opt:r1-max} and~\eqref{opt:r1-min} can be obtained by the solutions to the following two problems:
\begin{align}\label{prob-max}
    \max_{\|\mathbf{c}\mathbf{d}^\top\|_{\text{F}}\le \eta}:& \quad \hat{\beta}_i \\\nonumber
    \text{s.t.}& \quad \hat{\boldsymbol{\beta}} = \underset{\boldsymbol{\beta}}{\text{argmin}}:\, \|\mathbf{y} - \hat{\mathbf{X}}\boldsymbol{\beta} \|^2, \\\nonumber
    &\quad \hat{\mathbf{X}} = \mathbf{X}+ \mathbf{c}\mathbf{d}^\top.
\end{align}
and 
\begin{align}\label{prob-min}
    \min_{\|\mathbf{c}\mathbf{d}^\top\|_{\text{F}}\le \eta}:& \quad \hat{\beta}_i \\\nonumber
    \text{s.t.}& \quad \hat{\boldsymbol{\beta}} = \underset{\boldsymbol{\beta}}{\text{argmin}}:\, \|\mathbf{y} - \hat{\mathbf{X}}\boldsymbol{\beta} \|^2, \\\nonumber
    &\quad \hat{\mathbf{X}} = \mathbf{X}+ \mathbf{c}\mathbf{d}^\top.
\end{align}
We can further write the above two problems in a more unified form:
\begin{align}\label{prob-unif}
    \min_{\|\mathbf{c}\mathbf{d}^\top\|_{\text{F}}\le \eta}:& \quad \mathbf{e}^\top \hat{\boldsymbol{\beta}} \\\nonumber
    \text{s.t.}& \quad \hat{\boldsymbol{\beta}} = \underset{\boldsymbol{\beta}}{\text{argmin}}:\, \|\mathbf{y} - \hat{\mathbf{X}}\boldsymbol{\beta} \|^2, \\\nonumber
    &\quad \hat{\mathbf{X}} = \mathbf{X}+ \mathbf{c}\mathbf{d}^\top.
\end{align}
If $\mathbf{e} = \mathbf{e}_i$, in which $ \mathbf{e}_i$ is a vector with the $i$th entry being $1$ and all other entries being zero, problem~\eqref{prob-unif} is equivalent to problem~\eqref{prob-min}. If $\mathbf{e} = -\mathbf{e}_i$, problem~\eqref{prob-unif} is equivalent to problem~\eqref{prob-max}. Hence, in the following part, we will focus on solving this unified problem~\eqref{prob-unif}. 

To solve problem~\eqref{prob-unif}, we can first solve the lower level optimization problem in the constraints, where it admits a simple solution that $\hat{\boldsymbol{\beta}} = \hat{\mathbf{X}}^\dagger\mathbf{y}$ and $\mathbf{X}^\dagger$ is the pseudo-inverse of $\mathbf{X}$. Different from the adding one data point case discussed in Section~\ref{sec:one-data-attack}, this pseudo-inverse does not have a simple expression in terms of $\mathbf{c}$, $\mathbf{d}$ and $\mathbf{X}$.
However, this pseudo-inverse can be written as
$\hat{\mathbf{X}}^\dagger = \mathbf{X}^\dagger + \mathbf{G}$, 
where $\mathbf{G}$ depends on $\mathbf{c}$, $\mathbf{d}$, and $\mathbf{X}$. To write the expression of $\hat{\mathbf{X}}^\dagger$ in a more compact way, we first introduce the following notations:
\begin{align} \nonumber
    &\gamma = 1+\mathbf{d}^\top\mathbf{X}^\dagger\mathbf{c},
    & \mathbf{v} = \mathbf{X}^\dagger\mathbf{c}, \\ \nonumber
    &\mathbf{n} = (\mathbf{X}^\dagger)^\top\mathbf{d},
    &\mathbf{w} = (\mathbf{I}-\mathbf{X}\mathbf{X}^\dagger)\mathbf{c}.
\end{align}
Furthermore, we assume that the feature matrix has full column rank. Depending on the values of $\mathbf{w}$ and $\gamma$, the expression of $\mathbf{G}$ can be divided into the following four cases \cite{petersen2008matrix}:\\
\textbf{Case 1}: when $\|\mathbf{w}\| =0$, $\gamma \neq 0$, 
\begin{align}\label{r1-case1}
\mathbf{G} = 
- \frac{1}{\gamma}\mathbf{v}\mathbf{n}^\top;
\end{align}
\textbf{Case 2}: when $\|\mathbf{w}\|\neq 0$, $\gamma = 0$,
\begin{align} \label{r1-case2}
\mathbf{G} =
-\frac{1}{\|\mathbf{n}\|^2}\mathbf{X}^\dagger\mathbf{n}\mathbf{n}^\top 
-\frac{1}{\|\mathbf{w}\|^2}\mathbf{v}\mathbf{w}^\top;
\end{align}
\textbf{Case 3}: when $\|\mathbf{w}\| \neq 0$, $\gamma \neq 0$, 
\begin{align}\label{r1-case3}\nonumber
\mathbf{G} &= \frac{1}{\gamma}\mathbf{X}^\dagger\mathbf{n}\mathbf{w}^\top \\
&- \frac{\gamma}{\|\mathbf{n}\|^2\|\mathbf{w}\|^2+\gamma^2}
\left( \frac{\|\mathbf{w}\|^2}{\gamma}\mathbf{X}^\dagger\mathbf{n}+\mathbf{v}\right)
\left( \frac{\|\mathbf{n}\|^2}{\gamma}\mathbf{w}+\mathbf{n} \right)^\top;
\end{align}
\textbf{Case 4}: when $\|\mathbf{w}\|=0$, $\gamma =0 $,
\begin{align}\label{r1-case4}
\mathbf{G} = 
-\frac{1}{\|\mathbf{v}\|^2}\mathbf{v}\mathbf{v}^\top \mathbf{X}^\dagger
-\frac{1}{\|\mathbf{n}\|^2}\mathbf{X}^\dagger\mathbf{n}\mathbf{n}^\top + \frac{\mathbf{v}^\top \mathbf{X}^\dagger\mathbf{n}}{\|\mathbf{v}\|^2\|\mathbf{n}\|^2}\mathbf{v}\mathbf{n}^\top.
\end{align}
Since $\hat{\boldsymbol{\beta}} = \hat{\mathbf{X}}^\dagger \mathbf{y} = (\mathbf{X}^\dagger + \mathbf{G})\mathbf{y}$ and $\mathbf{X}^\dagger$ does not depend on $\mathbf{c}$ and $\mathbf{d}$, our problem is equivalent to 
\begin{align}\label{opt:r1-sim}
    \min_{\mathbf{c}, \mathbf{d}}:&\quad \mathbf{e}^\top \mathbf{G} \mathbf{y}\\ \nonumber
    \text{s.t.} & \quad \|\mathbf{c}\cdot\mathbf{d}^\top \|_{\text{F}} \le \eta.
\end{align}
Suppose $(\mathbf{c}^*, \mathbf{d}^*)$ is the optimal solution of~\eqref{opt:r1-sim}, it is easy to see that for nonzero $k$, $(k\mathbf{c}^*, \mathbf{d}^*/k)$ is also a valid optimal solution. To avoid the ambiguity, it is necessary and possible to further reduce the feasible region. Hence, we put an extra constraint on $\mathbf{c}$, where we restrict the norm of $\mathbf{c}$ to be less than or equal to $1$. As the result, our problem can be further written as
\begin{align}\label{opt:r1-sim2}
    \min_{\mathbf{c}, \mathbf{d}}:&\quad \mathbf{e}^\top \mathbf{G} \mathbf{y}\\\nonumber
    \text{s.t.} & \quad \|\mathbf{c}\| \le 1, \\ \nonumber
    &\quad \|\mathbf{d} \| \le \eta,
\end{align}
in which we use the identity $\|\mathbf{c}\mathbf{d}^\top\|_{\text{F}}=\|\mathbf{c}\|\|\mathbf{d}\|$. It is clear that problem~\eqref{opt:r1-sim} and problem~\eqref{opt:r1-sim2} have the same optimal objective value. 

Since $\mathbf{G}$ is decided by $\mathbf{c}$, $\mathbf{d}$, and $\mathbf{X}$, different values of $\mathbf{c}$ and $\mathbf{d}$ may result in different objective functions. As we have seen the expression of $\mathbf{G}$ can be divided into four different cases,  we will discuss these cases one by one in the following. 

Before further discussion, let us assume the singular value decomposition of the original feature matrix is $\mathbf{X} =\mathbf{U}\mathbf{\Sigma}\mathbf{V}^\top $, where $\mathbf{\Sigma} = [\text{diag}(\sigma_1,\sigma_2,\cdots,\sigma_m), \mathbf{0}]^\top$ and $\sigma_1 \ge \sigma_2 \ge \cdots \ge \sigma_m>0$. With this decomposition, we have $\mathbf{X}^\dagger =\mathbf{V} \mathbf{\Sigma}^\dagger\mathbf{U}^\top $, where $\mathbf{\Sigma}^\dagger = [\text{diag}(\sigma_1^{-1},\sigma_2^{-1},\cdots,\sigma_m^{-1}),\mathbf{0}]$. 

In Case 1~\eqref{r1-case1}, if $\eta\geq\sigma_m$, by letting $\gamma \to 0$, we have our objective being minus infinity by setting $(\mathbf{c}, \mathbf{d})=(\mathbf{u}_m, -\sigma_m \mathbf{v}_m)$ or $(\mathbf{c}, \mathbf{d})=(-\mathbf{u}_m, \sigma_dm \mathbf{v}_m)$, where $\mathbf{u}_m$ and $\mathbf{v}_m$ is the $m$th column of matrices $\mathbf{U}$ and $\mathbf{V}$, respectively. Hence, we conclude that, when $\eta\ge \sigma_m$, the optimal value of problem~\eqref{opt:r1-sim2} is unbounded from below. As the result, throughout this section, we assume $\eta <\sigma_m$. Thus, we have $\gamma = 1+ \mathbf{d}^\top \mathbf{X}^\dagger \mathbf{c} \ge 1 - \| \mathbf{c}\cdot\mathbf{d}^\top\|\|\mathbf{X}^\dagger\| \ge 1 - \frac{\eta}{\sigma_m} >0$. As $\gamma \neq 0$, we don't need to consider Case 2 and Case 4. Actually, Case 1 is a special case of Case 3. In particular, setting $\mathbf{w}=\mathbf{0}$ in Case 3, we recover Case 1. Based on these discussion, we only need to solve the problem in Case 3.
Let $h$ denote our objective $h(\mathbf{c},\mathbf{d}) = \mathbf{e}^\top \mathbf{G} \mathbf{y}$, plug in the expression of $\mathbf{G}$, and we have
\begin{align}
    \nonumber 
    h(\mathbf{c},\mathbf{d}) =& \frac{1}{\|\mathbf{n}\|^2\|\mathbf{w}\|^2+\gamma^2}
    \big(
    \gamma\mathbf{e}^\top \mathbf{X}^\dagger\mathbf{n}\mathbf{w}^\top \mathbf{y}
    -\gamma\mathbf{e}^\top \mathbf{v}\mathbf{n}^\top \mathbf{y} \\
    &-\|\mathbf{w}\|^2\mathbf{e}^\top\mathbf{X}^\dagger\mathbf{n}\mathbf{n}^\top \mathbf{y}
    -\|\mathbf{n}\|^2\mathbf{e}^\top \mathbf{v}\mathbf{w}^\top \mathbf{y}
    \big).
    \label{obj:h}
\end{align}
We need to optimize $h(\mathbf{c},\mathbf{d})$ over $\mathbf{c}$ and $\mathbf{d}$ with the constraint $\|\mathbf{c}\|\le 1$ and $\|\mathbf{d}\|\le \eta$. However, $h(\mathbf{c},\mathbf{d})$ is a ratio of two quartic functions, which is known to be hard non-convex problem in general. To solve this problem, similar to \cite{mei2015using}, we can use the projected gradient descent method (Please see Appendix~\ref{appx:pgd} for details). However, it is hard to choose a proper stepsize and its convergence is not clear when the projected gradient descent is used on a non-convex problem. In the following, we provide an alternating optimization algorithm with provable convergence guarantee.

The enabling observation of our approach is that, even though the optimization problem is a complex non-convex problem, for a fixed $\mathbf{c}$, $h$ is a ratio of two quadratic functions with respect to $\mathbf{d}$. Similarly, for a fixed $\mathbf{d}$, $h$ is a ratio of two quadratic functions with respect to $\mathbf{c}$. Ratio of two quadratic functions admits a hidden convex structure \cite{beck2009convex}. Inspired by this, we decompose our optimization variables into $\mathbf{c}$ and $\mathbf{d}$, and then use alternating optimization algorithm described in Algorithm~\ref{alg:ao} to sequentially optimize $\mathbf{c}$ and $\mathbf{d}$.

\begin{algorithm}[t]
\caption{Optimal Rank-one Attack Matrix Design via the Alternating Optimization Algorithm}\label{alg:ao}
\begin{algorithmic}[1]
\State \textbf{Input}: data set $\{y_i, \mathbf{x}_i\}_{i=1}^n$ and energy budget $\eta$.
\State \textbf{Initialize}: randomly initialize $\mathbf{c}^0$ and $\mathbf{d}^0$, set number of iterations $k=0$.
\State compute $\mathbf{G}$ according to~\eqref{r1-case3}.
\State plug in the expression of $\mathbf{G}$ into~\eqref{opt:r1-sim2}, and obtain our objective, $h(\mathbf{c}, \mathbf{d})$, as in~\eqref{obj:h}. 
\State \textbf{Do}
\State update $\mathbf{c}^k$ by solving: $\mathbf{c}^k = \underset{\|\mathbf{c}\|\le 1}{\text{argmin}}:  h(\mathbf{c},\mathbf{d}^{k-1}),$

\State update $\mathbf{d}^k$ by solving: $\mathbf{d}^k = \underset{\|\mathbf{d}\|\le \eta }{\text{argmin}}:  h(\mathbf{c}^k, \mathbf{d}),$ 

\State set $k = k+1$,
\State \textbf{While} convergence conditions are not meet.
\State compute the modification matrix $\mathbf{\Delta} =\mathbf{c}^k(\mathbf{d}^k)^\top$.
\State \textbf{Output}: return the modification matrix, $\mathbf{\Delta}$.
\end{algorithmic}
\end{algorithm}

The core of this algorithm is to solve the following two problems 
\begin{align}
    \label{opt-c}
    \mathbf{c}^k = \underset{\|\mathbf{c}\|\le 1}{\text{argmin}}: h(\mathbf{c},\mathbf{d}^{k-1}), 
\end{align}
and 
\begin{align}
    \label{opt-d}
    \mathbf{d}^k = \underset{\|\mathbf{d}\|\le \eta}{\text{argmin}}: h(\mathbf{c}^k,\mathbf{d}).
\end{align}
For a fixed $\mathbf{d}$, the objective of problem~\eqref{opt-c} becomes 
$h(\mathbf{c},\mathbf{d}) = {h_1(\mathbf{c})}/{h_2(\mathbf{c})},$
where we omit the superscript of $\mathbf{d}$, 
\begin{align}\nonumber
    &h_1(\mathbf{c})=
    \mathbf{c}^\top 
    \big[
    \mathbf{e}^\top \mathbf{X}^\dagger \mathbf{n}\mathbf{n}\mathbf{y}^\top(\mathbf{I} - \mathbf{X}\mathbf{X}^\dagger)
    -\mathbf{n}^\top\mathbf{y}\mathbf{n}\mathbf{e}^\top \mathbf{X}^\dagger \\ \nonumber
    &-\mathbf{e}^\top\mathbf{X}^\dagger\mathbf{n}\mathbf{n}^\top \mathbf{y}(\mathbf{I}-\mathbf{X}\mathbf{X}^\dagger) 
    -\|\mathbf{n}\|^2(\mathbf{X}^\dagger)^\top\mathbf{e}\mathbf{y}^\top (\mathbf{I} - \mathbf{X}\mathbf{X}^\dagger)
    \big]\mathbf{c} \\ 
    &+\big[
    \mathbf{e}^\top \mathbf{X}^\dagger\mathbf{n}(\mathbf{I-\mathbf{X}\mathbf{X}^\dagger})\mathbf{y} 
    -\mathbf{n}^\top \mathbf{y}(\mathbf{X}^\dagger)^\top\mathbf{e}
    \big]^\top\mathbf{c} , \label{f1c}
\end{align}
and 
\begin{align}
    h_2(\mathbf{c}) = &
    \mathbf{c}^\top \big[
    \|\mathbf{n}\|^2(\mathbf{I}-\mathbf{X}\mathbf{X}^\dagger)
    +\mathbf{n}\mathbf{n}^\top 
    \big]\mathbf{c}
    +2\mathbf{n}^\top \mathbf{c} +1.\label{f2c}
\end{align}
Hence, problem~\eqref{opt-c} can be written as:
\begin{align} \label{prob-qrc}
    \min_{\mathbf{c}}:& \quad \frac{h_1(\mathbf{c})}{h_2(\mathbf{c})} \\
    \text{s.t.}&\quad \|\mathbf{c}\| \le 1, 
\end{align}
where the forms of $h_i(\mathbf{c}) = \mathbf{c}^\top \mathbf{A}_i\mathbf{c} + 2\mathbf{b}_i^\top \mathbf{c}+ l_i,\, i=1,2$ and $\mathbf{A}_i$, $\mathbf{b}_i$ and $l_i$ can be derived from~\eqref{f1c} and~\eqref{f2c}. The objective of this problem is the ration of two quadratic functions. Even though it is non-convex, it has certain hidden convex structures. The following theorem characterizes its optimal solution by solving a semidefinite programming \cite{beck2009convex}. 

\begin{theorem}\label{thm1}
The optimal value of problem~\eqref{prob-qrc} is the same as the following optimal value
\begin{align}\label{thm1:opt-dual}
    \max_{\alpha, \,\nu \ge 0}:&\quad \alpha \\ \nonumber
    \text{s.t.}&\quad 
    \begin{bmatrix}
    \mathbf{A}_1  & \mathbf{b}_1 \\ 
    \mathbf{b}_1^\top & l_1 
    \end{bmatrix}
    \succeq
    \alpha
    \begin{bmatrix}
    \mathbf{A}_2   & \mathbf{b}_2 \\
    \mathbf{b}_2^\top & l_2
    \end{bmatrix}
    -\nu 
    \begin{bmatrix}
    \mathbf{I} & \mathbf{0} \\
    \mathbf{0}   &  -1 
    \end{bmatrix}
\end{align}
\end{theorem}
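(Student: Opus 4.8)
The plan is to recognize problem~\eqref{prob-qrc} as the minimization of a ratio of two (in general indefinite) quadratic functions over a Euclidean ball, and to obtain the SDP~\eqref{thm1:opt-dual} as the exact dual of its epigraph reformulation via the S-procedure; this is precisely the route of \cite{beck2009convex}. The one fact that must be secured before anything else is that the denominator $h_2$ is strictly positive on the feasible set, so that the fractional objective is well defined and the epigraph trick is valid.

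First I would verify this positivity. Writing $\gamma = 1+\mathbf{n}^\top\mathbf{c}$ and $\mathbf{w} = (\mathbf{I}-\mathbf{X}\mathbf{X}^\dagger)\mathbf{c}$ and using idempotency of the projector, one checks directly from~\eqref{f2c} that $h_2(\mathbf{c}) = \|\mathbf{n}\|^2\|\mathbf{w}\|^2 + \gamma^2$. Since $\|\mathbf{n}\| \le \|\mathbf{X}^\dagger\|\,\|\mathbf{d}\| \le \eta/\sigma_m < 1$ under the standing assumption $\eta < \sigma_m$, for every $\mathbf{c}$ with $\|\mathbf{c}\| \le 1$ we have $\gamma \ge 1 - \eta/\sigma_m > 0$, hence $h_2(\mathbf{c}) \ge \gamma^2 > 0$. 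Consequently $h_1(\mathbf{c})/h_2(\mathbf{c}) \ge \alpha$ is equivalent to $h_1(\mathbf{c}) - \alpha h_2(\mathbf{c}) \ge 0$ on the ball, and the optimal value of~\eqref{prob-qrc}, which is attained because the ratio is continuous on a compact set, can be written as
\[
p^* = \sup\bigl\{\, \alpha : h_1(\mathbf{c}) - \alpha h_2(\mathbf{c}) \ge 0 \ \text{ for all } \mathbf{c} \text{ with } \|\mathbf{c}\|^2 \le 1 \,\bigr\}.
\]
Next I would apply the S-lemma to the implication ``$1-\|\mathbf{c}\|^2 \ge 0 \implies h_1(\mathbf{c}) - \alpha h_2(\mathbf{c}) \ge 0$''. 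Slater's condition for the S-procedure holds trivially since $\mathbf{c} = \mathbf{0}$ gives $1-\|\mathbf{c}\|^2 > 0$, so the implication is equivalent to the existence of $\nu \ge 0$ with $h_1(\mathbf{c}) - \alpha h_2(\mathbf{c}) - \nu(1-\|\mathbf{c}\|^2) \ge 0$ for all $\mathbf{c}$. Homogenizing each quadratic $q(\mathbf{c}) = \mathbf{c}^\top\mathbf{A}\mathbf{c} + 2\mathbf{b}^\top\mathbf{c} + l$ by the matrix $\begin{bsmallmatrix}\mathbf{A} & \mathbf{b}\\ \mathbf{b}^\top & l\end{bsmallmatrix}$, and noting $1-\|\mathbf{c}\|^2$ corresponds to $\begin{bsmallmatrix}-\mathbf{I} & \mathbf{0}\\ \mathbf{0} & 1\end{bsmallmatrix}$, this nonnegativity is exactly the linear matrix inequality in~\eqref{thm1:opt-dual}; taking the supremum over $\alpha$ subject to the existence of a valid $\nu \ge 0$ then gives $p^*$ equal to the optimal value of that SDP.

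The hard part will be the first step: establishing that $h_2 > 0$ on the feasible ball (equivalently, that the excluded denominator zero $\gamma = 0$ cannot occur), since this is what justifies the equivalence $h_1/h_2 \ge \alpha \iff h_1 - \alpha h_2 \ge 0$ and hence the applicability of the S-lemma in this clean form; everything after it is routine homogenization bookkeeping together with the well-known exactness of the S-procedure for a single quadratic constraint.
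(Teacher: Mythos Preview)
Your proposal is correct and follows the same route as the paper: establish that the denominator $h_2$ is strictly positive on the unit ball under the standing assumption $\eta<\sigma_m$, and then obtain the SDP~\eqref{thm1:opt-dual} via the fractional--QCQP duality of \cite{beck2009convex} (equivalently, the S-lemma applied to the epigraph reformulation). The only cosmetic difference is that the paper verifies the stronger matrix condition $\begin{bsmallmatrix}\mathbf{A}_2 & \mathbf{b}_2\\ \mathbf{b}_2^\top & l_2\end{bsmallmatrix}+\mu\begin{bsmallmatrix}\mathbf{I} & \mathbf{0}\\ \mathbf{0} & -1\end{bsmallmatrix}\succ 0$ for some $0<\mu<1-\|\mathbf{n}\|^2$ via a Schur-complement argument (this being the precise hypothesis of the cited theorem) and then defers to \cite{beck2009convex}, whereas you obtain $h_2>0$ more directly from the clean identity $h_2(\mathbf{c})=\|\mathbf{n}\|^2\|\mathbf{w}\|^2+\gamma^2$ and carry out the S-procedure by hand.
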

\begin{proof}
First, we will show that there exists $\mu \ge 0$ such that 
\begin{align} \label{thm1-cond}
\begin{bmatrix}
\mathbf{A}_2      & \mathbf{b}_2 \\
\mathbf{b}_2^\top & l_2 
\end{bmatrix}
+ \mu
\begin{bmatrix}
\mathbf{I} & \mathbf{0} \\
\mathbf{0}  & -1
\end{bmatrix}
\succ \mathbf{0}. 
\end{align}	
	
To prove the left hand side of \eqref{thm1-cond} is positive definite, we can show the following two inequalities are true according to Schur complement condition for positive definite matrix 
\begin{align} \label{thm1-p1}
    l_2-\mu >0, 
    \\ 
    \mathbf{A}_2 + \mu\mathbf{I}- \frac{1}{1-\mu}\mathbf{b}_2\mathbf{b}_2^\top >0\label{thm1-p2},
\end{align}
where $l_2=1$. Plug in the expression of $\mathbf{A}_2$, the left hand of inequality~\eqref{thm1-p2} can be written as
\begin{align*}
    &\mathbf{A}_2 + \mu\mathbf{I}-\frac{1}{1-\mu}\mathbf{b}_2\mathbf{b}_2^\top  \\
    &= 
    \|\mathbf{n}\|^2(\mathbf{I} - \mathbf{X}\mathbf{X}^\dagger)
    +\mu\mathbf{I} - \frac{\mu}{1-\mu}\mathbf{n}\mathbf{n}^\top.
\end{align*}
Since $\mathbf{I}-\mathbf{X}\mathbf{X}^\dagger$ is a projection matrix, it is positive semi-definite. So, we only need to prove
\begin{align}
    \mu\mathbf{I} -\frac{\mu}{1-\mu}\mathbf{n}\mathbf{n}^\top \succ \mathbf{0}.
\end{align}
Since $\mathbf{n}\mathbf{n}^\top$ is rank-one and its non-zero eigenvalue is $\|\mathbf{n}\|^2$, it equals to proving
$
    \|\mathbf{n}\|^2/(1-\mu) < 1.
$
To guarantee this inequality, we only need to make sure $\mu < 1-\|\mathbf{n}\|^2.$
Since $\|\mathbf{X}^\dagger\| \le 1/\sigma_m$ and $\|\mathbf{d}\| \le \eta$, we get
$
    \|\mathbf{n}\|^2
    = \|(\mathbf{X}^\dagger)^\top\mathbf{d}\|^2
    \le \|\mathbf{X}^\dagger\|^2 \|\mathbf{d}\|^2 
    \le {\eta^2}/{\sigma_m^2}
    < 1.
$
By choosing $0 <  \mu <1-\|\mathbf{n}\|^2 < 1$, we can ensure \eqref{thm1-p1} and \eqref{thm1-p2} are both satisfied, and hence inequality~\eqref{thm1-cond} is satisfied. 

As inequality~\eqref{thm1-cond} holds, we have $h_2(\mathbf{c}) > 0$. It can be verified by the fact that left multiplying $[\mathbf{c}^\top, 1]$ and right multiplying $[\mathbf{c}^\top, 1]^\top$ result in $h_2(\mathbf{c}) > \mu(1-\mathbf{c}^\top\mathbf{c})>0$. So, our objective, $h(\mathbf{c}, \mathbf{d})$, is well defined. Using the same technique that transforms problem~\eqref{opt:qcrq-orig} to~\eqref{opt:qcrq}, we can further convert~\eqref{prob-qrc} to a quadratic constrained quadratic programming. Further analysis reveals when~\eqref{thm1-cond} holds, the feasible set of this quadratic constraint quadratic programming is compact. So, the minimum is attainable. Thus, we can solve it through is its dual problem~\eqref{thm1:opt-dual}. 
For the rest of the proof, please refer to \cite{beck2009convex} for details.
\end{proof}
From Theorem~\ref{thm1}, we know the optimal value of~\eqref{prob-qrc}. We now discuss how to find optimal $\mathbf{c}$ to achieve this value. 
Suppose the optimal solution of problem~\eqref{thm1:opt-dual} is $(\alpha^*, \nu^*)$. Since, $h_2(\mathbf{c}) >0$, we have $h_1(\mathbf{c}) \ge \alpha^* {h_2(\mathbf{c})} $ for any feasible $\mathbf{c}$. 
Hence, we can compute the optimal solution of problem~\eqref{prob-qrc} by solving
\begin{align}\label{thm1-prime}
    \underset{\mathbf{c}}{\text{argmin}}:&\quad h_1(\mathbf{c}) - \alpha^* h_2(\mathbf{c}) \\
    \text{s.t.}& \quad \|\mathbf{c}\|^2 \le 1
\end{align}
This problem is just a trust region problem. There are several existing methods to solve it efficiently. In this paper, we employ the method describe in \cite{beck2006finding}. 

Now, we turn to solve problem~\eqref{opt-d}. Since~\eqref{opt-d} and~\eqref{opt-c} have similar structure, we can employ the methods described in Theorem~\ref{thm1} and \eqref{thm1-prime} to find its optimal value and optimal solution for problem~\eqref{opt-d}.

Until now, we have fully described how to solve the intermediate problems in the alternating optimization method. The following theorem shows that the proposed alternating optimization algorithm will converge. Suppose the generated sequence of solution is $\{\mathbf{c}^k,\,\mathbf{d}^k\},\, k=0,1,\cdots$, and we have the following theorem:
\begin{theorem}
The sequence $\{\mathbf{c}^k,\,\mathbf{d}^k\}$ admits a limit point $\{\bar{\mathbf{c}}\,,\bar{\mathbf{d}}\}$ and we have 
\begin{align}\label{thm:limit}
\lim_{k\to\infty} h(\mathbf{c}^k,\mathbf{d}^k) = h(\bar{\mathbf{c}},\,\bar{\mathbf{d}}).
\end{align}
Furthermore, every limit point is a critical point, which means 
\begin{align}\label{thm:critic}
\nabla h(\bar{\mathbf{c}},\bar{\mathbf{d}})^\top 
\begin{bmatrix}
\mathbf{c} - \bar{\mathbf{c}} \\
\mathbf{d} - \bar{\mathbf{d}}
\end{bmatrix}
\ge 0, 
\end{align}
for any $\| \mathbf{c}\| \le 1$ and $\| \mathbf{d} \| \le \eta$. 
\end{theorem}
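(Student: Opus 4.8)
The plan is to treat this as a standard block-coordinate (alternating) minimization argument adapted to the fact that each sub-minimization is solved exactly and globally via Theorem~\ref{thm1}. First I would establish that the sequence of objective values $\{h(\mathbf{c}^k,\mathbf{d}^k)\}$ is monotone nonincreasing: by construction $\mathbf{c}^k$ minimizes $h(\cdot,\mathbf{d}^{k-1})$ over the unit ball and $\mathbf{d}^k$ minimizes $h(\mathbf{c}^k,\cdot)$ over the $\eta$-ball, so $h(\mathbf{c}^k,\mathbf{d}^k)\le h(\mathbf{c}^k,\mathbf{d}^{k-1})\le h(\mathbf{c}^{k-1},\mathbf{d}^{k-1})$. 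Next I would argue the sequence is bounded below: the feasible set $\{\|\mathbf{c}\|\le 1\}\times\{\|\mathbf{d}\|\le\eta\}$ is compact, and under the standing assumption $\eta<\sigma_m$ we have $\gamma>0$ and (as shown in the proof of Theorem~\ref{thm1}) the denominators $h_2(\mathbf{c})$, and the analogous denominator in $\mathbf{d}$, are uniformly bounded away from zero, so $h$ is continuous on the whole feasible set; a continuous function on a compact set is bounded. Hence $h(\mathbf{c}^k,\mathbf{d}^k)$ converges to some limit $h^\star$.

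Then I would extract a convergent subsequence: since $\{(\mathbf{c}^k,\mathbf{d}^k)\}$ lies in a compact set, some subsequence $(\mathbf{c}^{k_j},\mathbf{d}^{k_j})\to(\bar{\mathbf{c}},\bar{\mathbf{d}})$, and by continuity $h(\bar{\mathbf{c}},\bar{\mathbf{d}})=h^\star$, giving \eqref{thm:limit}. For the critical-point claim \eqref{thm:critic}, the key is to pass the exact-minimization optimality to the limit. Along the subsequence, $\mathbf{c}^{k_j}$ is a global minimizer of $h(\cdot,\mathbf{d}^{k_j-1})$; I would also pass to a further subsequence so that $\mathbf{d}^{k_j-1}\to\bar{\mathbf{d}}'$ for some feasible $\bar{\mathbf{d}}'$, and use the sandwich $h(\mathbf{c}^{k_j},\mathbf{d}^{k_j-1})$, $h(\mathbf{c}^{k_j-1},\mathbf{d}^{k_j-1})\to h^\star$ together with monotonicity to conclude $\bar{\mathbf{c}}$ minimizes $h(\cdot,\bar{\mathbf{d}}')$ and, with one more comparison step, that $h(\bar{\mathbf{c}},\bar{\mathbf{d}})=h(\bar{\mathbf{c}},\bar{\mathbf{d}}')$ so that the alternating step between $\bar{\mathbf{d}}'$ and $\bar{\mathbf{d}}$ is stationary; from the exact partial minimality we get $\nabla_{\mathbf{c}} h(\bar{\mathbf{c}},\bar{\mathbf{d}})^\top(\mathbf{c}-\bar{\mathbf{c}})\ge 0$ for all $\|\mathbf{c}\|\le 1$ and $\nabla_{\mathbf{d}} h(\bar{\mathbf{c}},\bar{\mathbf{d}})^\top(\mathbf{d}-\bar{\mathbf{d}})\ge 0$ for all $\|\mathbf{d}\|\le\eta$, and adding these two first-order conditions yields \eqref{thm:critic}.

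The main obstacle I anticipate is the bookkeeping needed to line up the indices $k_j$ and $k_j-1$ so that \emph{both} $\mathbf{c}$ and $\mathbf{d}$ components converge along a common subsequence, and then to rule out the possibility that the limit point has a ``stale'' $\mathbf{d}$-block (i.e., $\bar{\mathbf{d}}'\ne\bar{\mathbf{d}}$) that would break the joint stationarity. This is handled by the standard trick: because the objective gap between consecutive iterates tends to zero and each block-minimization is over a compact set with a continuous, strictly-positive-denominator objective, the partial minimizer map is well behaved enough that the limiting $\mathbf{d}$-block is itself a minimizer of $h(\bar{\mathbf{c}},\cdot)$, forcing the first-order condition to hold at $(\bar{\mathbf{c}},\bar{\mathbf{d}})$ directly. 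A secondary technical point is differentiability of $h$: since $h_1,h_2$ (and their $\mathbf{d}$-analogues) are polynomials and the denominators are bounded away from zero on the feasible set, $h$ is continuously differentiable there, so the variational inequalities coming from exact minimization over a convex set are exactly the stated gradient inequalities, and no subgradient machinery is needed.
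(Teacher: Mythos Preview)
Your proposal is correct and follows essentially the same route as the paper: compactness of the feasible set plus continuity and monotonicity give the convergence of $\{h(\mathbf{c}^k,\mathbf{d}^k)\}$ to $h(\bar{\mathbf{c}},\bar{\mathbf{d}})$, and the critical-point claim follows from the standard block-coordinate argument for exact alternating minimization over closed convex sets. The paper itself only writes out the first half and delegates the stationarity part to Corollary~2 of Grippo--Sciandrone~\cite{grippo2000convergence}; your sketch of the subsequence-and-index bookkeeping, together with the observation that $h$ is $C^1$ on the feasible set because the denominators are bounded away from zero when $\eta<\sigma_m$, is precisely the content of that cited result specialized to two blocks.
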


\begin{proof}
We first give the proof of $\eqref{thm:limit}$. 
Since the sequence $\{\mathbf{c}^k, \mathbf{d}^k\}$ lies in the compact sets, $\{ (\mathbf{c}, \mathbf{d})\,|\, \|\mathbf{c}\| \le 1,\, \|\mathbf{d}\| \le \eta \}$, and according to the Bolzano-Weierstrass theorem \cite{bartle2000introduction}, $\{\mathbf{c}^k, \mathbf{d}^k\}$ must have limit points. So, there is a subsequence of $\{h^k\}$ which converges to $h(\bar{\mathbf{c}},\bar{\mathbf{d}})$. As the objective is a continuous function with respect to $\mathbf{c}$ and $\mathbf{d}$, the compactness of the constraint also implies the sequence of the objective value, $\{h^k\}$, is bounded from below. In addition, $\{h^k\}$ is a non-increasing sequence, which indicates that the sequence of the function value must converge. In summary, the sequence $\{h^k\}$ must converge to $h(\bar{\mathbf{c}},\bar{\mathbf{d}})$. 
For the rest of the proof, please refer to Corollary 2 of \cite{grippo2000convergence} for more details.
\end{proof} 

\section{Numerical Examples}\label{sec:ne}

In this subsection, we test our adversarial attack strategies on a practical regression problem. In this regression task, we use seven international indexes to predict the returns of the Istanbul Stock Exchange~\cite{akbilgic2014novel}. The data set contains 536 data samples, which are the records of the returns of Istanbul Stock Exchange with seven other international indexes starting from Jun. 5, 2009 to Feb. 22, 2011. 

\subsection{Attacking one specific regression coefficient}

\begin{figure}[t!]
\begin{minipage}[b]{0.48\linewidth}
  \centering
  \centerline{\includegraphics[width=\linewidth]{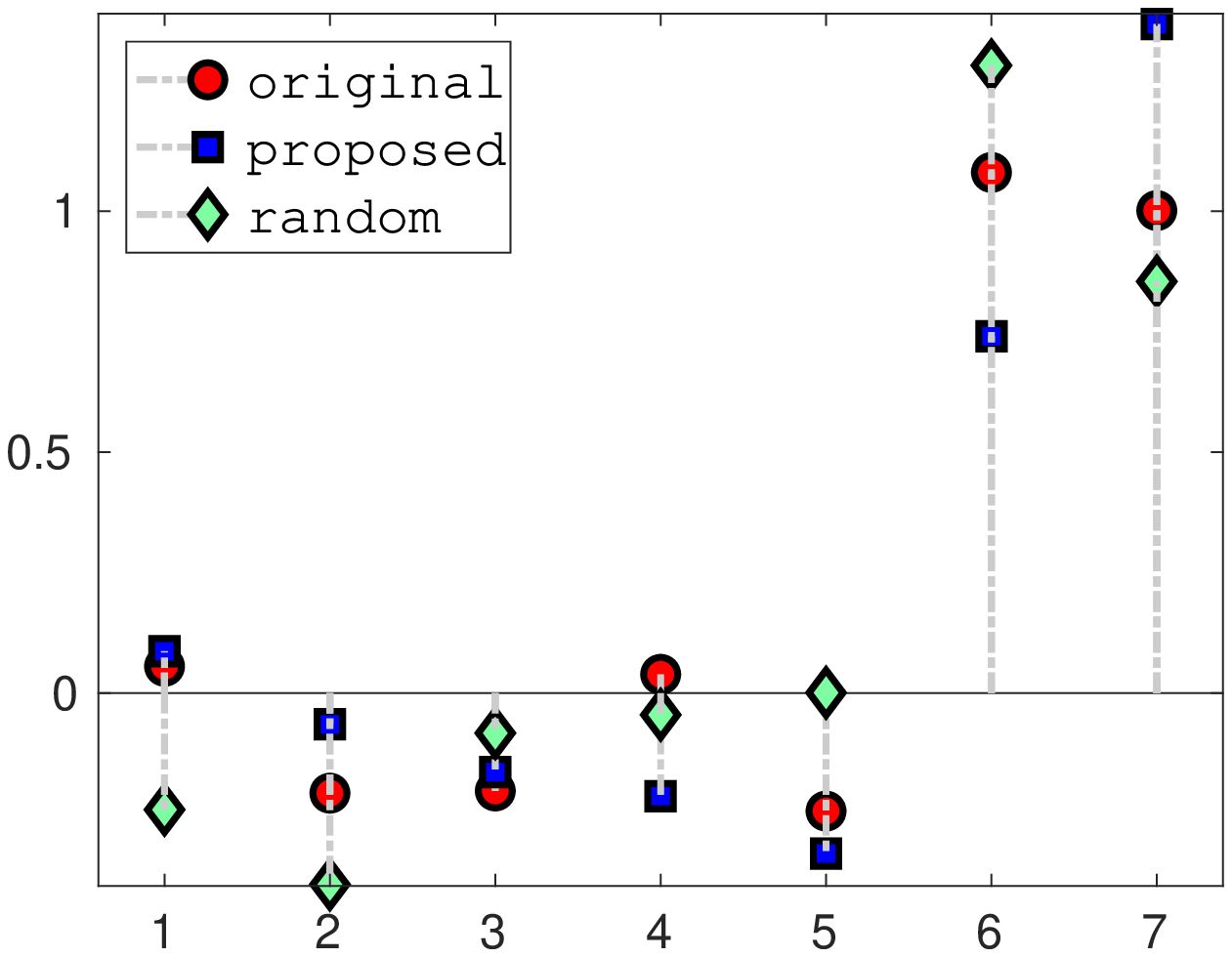}
  }

  \centerline{(a)}\medskip
\end{minipage}
\hfill
\begin{minipage}[b]{0.48\linewidth}
  \centering
  \centerline{\includegraphics[width=\linewidth]{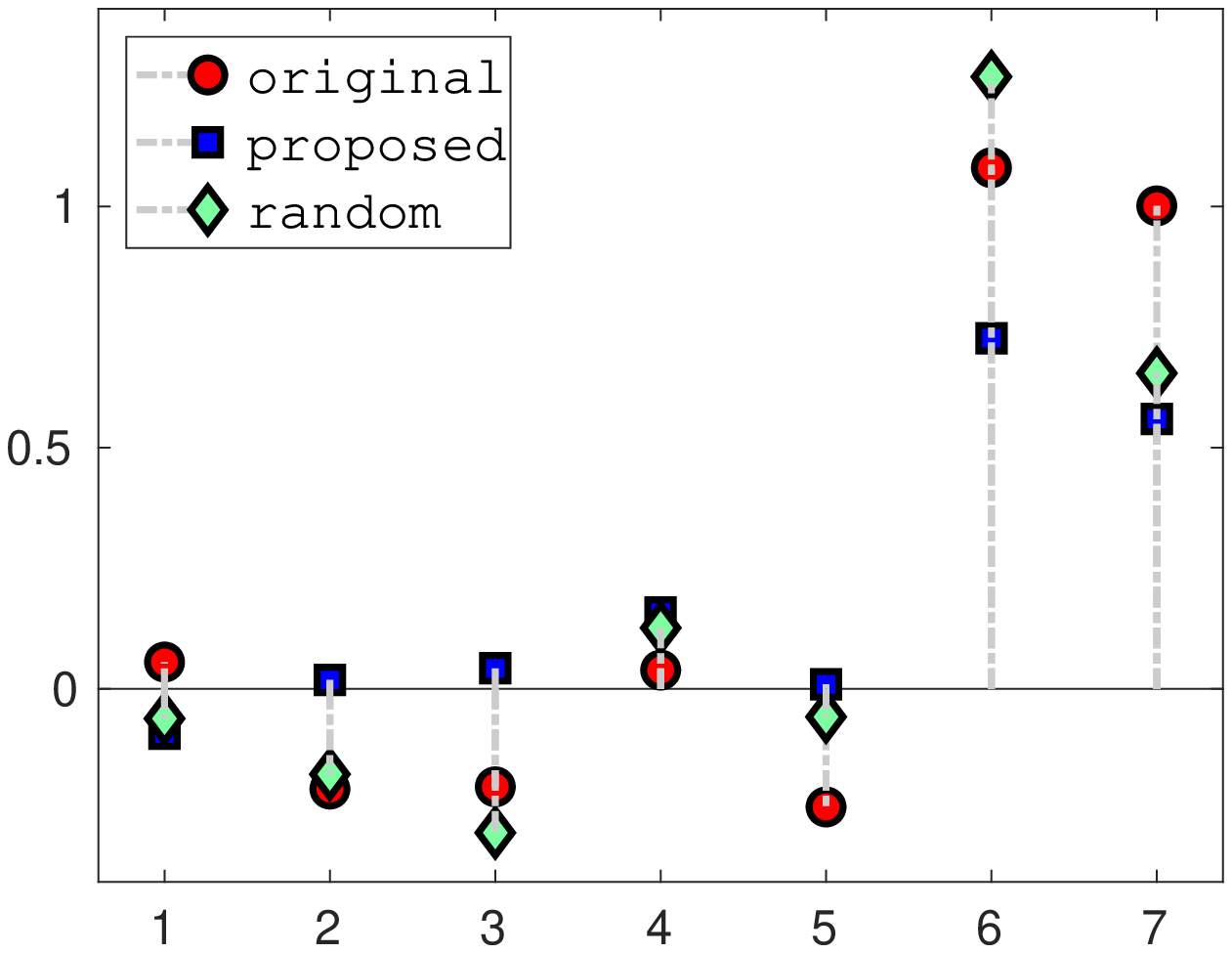}}
  \centerline{(b)}\medskip
\end{minipage}
\caption{The regression coefficients under our proposed and random attacks. The left figure shows the regression coefficients before and after attacking the fourth coefficient with objective~\eqref{opt:min-beta} and the right one shows the regression coefficients before and after attacking the fifth coefficient with objective~\eqref{opt:orig-min}.}
 \label{fig:exp}
\end{figure}

We first show the results for attacking one specific regression coefficient. The results are shown in Fig.~\ref{fig:exp}. In the figure, the $x$-axis denotes the index of the regression coefficients and the $y$-axis indicates the value of the regression coefficients. 
We design our first experiment to attack the fourth regression coefficient and try to make it small by solving problem~\eqref{opt:min-beta}.  We use two strategies to attack this coefficient with fixed energy budget $\eta=0.2$. The first strategy is the one proposed in this paper. As a comparison, we also use a random strategy. In the random strategy, we randomly generate the adversarial data point with each entry being i.i.d. generated from a standard normal distribution. Then, we normalize its energy to be $\eta$. We repeat this random attack $10000$ times and select the one with the smallest value of the fourth regression coefficient. 
In the second experiment, we intend to make the absolute value of the fifth regression coefficient small. We compare the proposed and the random attack strategies to attack the fifth coefficient with fixed energy budget $\eta=0.1$. Similarly, for the random attack strategy, we run $10000$ times random attacks and select the one with the smallest absolute value of the fifth regression coefficient. 

Fig.~\ref{fig:exp} (a) shows the regression coefficients before and after the first experiment and Fig.~\ref{fig:exp} (b) shows the regression coefficients before and after the second experiment. From the figures we can see that our proposed adversarial attack strategy is much more efficient than the random attack strategy. One can also observe that by only adding one adversarial example, designed using the approach characterized in this paper, one can dramatically change the value of a regression coefficient and hence change the importance of that explanatory variable. 

\subsection{Attacking without changing untargeted regression coefficients too much}\label{sec:addone}
\begin{figure}[t]
    \centering
    \includegraphics[width=.7\linewidth]{./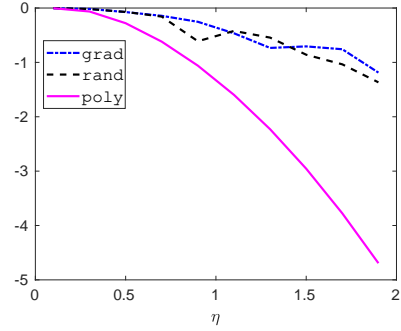}
    \caption{Attack the fourth regression coefficient with objective~\eqref{prob:gener-orig} and with different energy budgets. }
    \label{fig:muti-ind4etas}
\end{figure}

\begin{figure}[t]
    \centering
    \includegraphics[width=0.7\linewidth]{./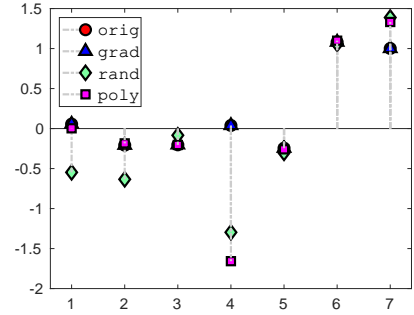}
    \caption{The regression coefficients after different kinds of strategies that attack the fourth regression coefficient with energy budget $\eta = 1$.}
    \label{fig:muti-ind4}
\end{figure}

From the numerical examples in the previous subsection, we can see the untargeted regression coefficients may change dramatically while we attacking one specific regression coefficient with an adversarial data point. For example,  as demonstrated in Fig.~\ref{fig:exp}, the sixth and seventh regression coefficients change significantly when we attack the fourth and fifth regression coefficients. To mitigate the undesirable changes of untargeted regression coefficients, we need more sophisticated attacking strategies. In this subsection, we will test different strategies with a more general objective function as demonstrated in Section~\ref{sec:atk-multi}. We also use the same data set as described in the previous subsection. We first try to attack the fourth regression coefficient to increase its importance while making only small changes to the rest of the regression coefficients. To accomplish this task, we aim to solve problem~\eqref{prob:gener-orig} with $\lambda = -1$. Given the energy budget, firstly, we use our semidefinite relaxation based algorithm to solve problem~\eqref{opt:s2final}, and then follow Algorithm~\ref{alg:poly-attack} to find the adversarial data point. For comparison, we also carry out the random attack strategy, in which we randomly generate the data point with each entry being i.i.d. according to the standard normal distribution. Then, we normalize its energy being $\eta$ and added it to the original data points. We repeat these random attacks $10000$ times and select the one with the smallest objective value. The third strategy is the projected gradient descent based strategy, where we use projected gradient descent algorithm to solve~\eqref{opt:s2final} and follow similar steps of Algorithm~\ref{alg:poly-attack} to find the adversarial data point. 
Projected gradient descent works much like the gradient descent except with an additional operation that projects result of each step onto the feasible set after moving in the direction of negative gradient \cite{parikh2014proximal}. 
We have described the general projected gradient descent algorithm in Appendix~\ref{appx:pgd} and Algorithm~\ref{alg:pgd}.
In our experiment, we use diminishing stepsize, $\alpha_t = 1/(t+1)$.
Since the projected gradient descent algorithm depends on the initial points heavily, given the energy budget, we repeat it $100$ times with different random initial points and treat the average of its objective values as the objective value of this algorithm.  

Fig.~\ref{fig:muti-ind4etas} shows the objective values under different energy budgets with different attacking strategies and Fig.~\ref{fig:muti-ind4} demonstrates the regression coefficients after one of the attacks of different strategies with $\eta=1$. In these figures, `orig' is the original regression coefficient, `grad' denotes the projected gradient descent algorithm, `rand' means the random strategy, and `poly' indicates the our semidefinite relaxation strategy. From these two figures, we can see our semidefinite relaxation based strategy performs much better than the other two strategies. In addition, in our experiment, our semidefinite relaxation method with relaxation order $2$ or $3$ can always lead to the global optimal solutions. Hence, the computational complexity of this method is still low. Fig.~\ref{fig:muti-ind4} also shows our relaxation based method leads to the largest magnitude of the fourth regression coefficient while keeping other regression coefficients almost unchanged.

\begin{figure}[t]
    \centering
    \includegraphics[width=.7\linewidth]{./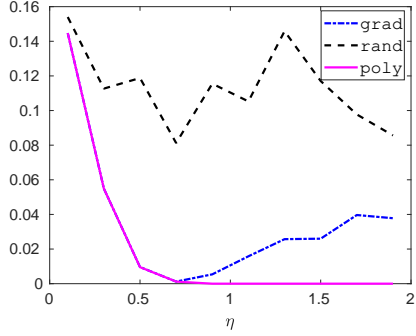}
    \caption{Attack the sixth regression coefficient with objective~\eqref{prob:gener-orig} under different energy budgets.}
    \label{fig:muti-ind6etas}
\end{figure}

\begin{figure}[t!]
    \centering
    \includegraphics[width=0.63\linewidth]{./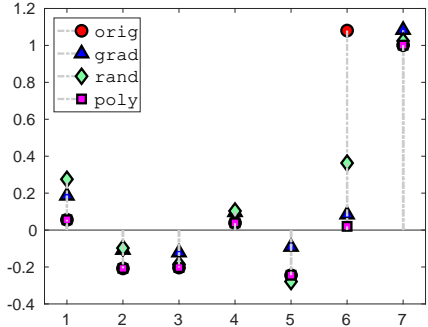}
    \caption{The regression coefficients after different kinds of strategies that attack the sixth regression coefficient with energy budget $\eta = 1$.}
    \label{fig:muti-ind6}
\end{figure}

In the second experiment, we attack the sixth regression coefficient and attempt to make its magnitude small while keeping the change of the rest of coefficients to be small. So, we set $\lambda = 1$ in problem~\eqref{prob:gener-orig} to achieve this goal. The settings of each strategy is similar to the ones in the first experiment. Fig.~\ref{fig:muti-ind6etas} shows the objective values with different strategies under different energy budgets and Fig.~\ref{fig:muti-ind6} demonstrates the regression coefficients after one of the attacks of those strategies respectively with energy budget $\eta =1$. From Fig.~\ref{fig:muti-ind6etas} we know the projected gradient descent based strategy and semidefinite relaxation based strategy achieve much lower objective values compared to the random attack strategy. Specifically, when the energy budget is smaller than $0.7$, both of the two strategies behave similarly. However, when the energy budget is larger than $0.7$, the projected gradient descent based strategy leads to larger objective value as the energy budget grows. This is because the projected gradient descent algorithm tends to find solutions at the boundary of the feasible set.  
By contrast, our semidefinite relaxation based strategy can find the global optimal solutions with relaxation order $2$ or $3$. Thus, it gives the best performance among the three strategies. Fig.~\ref{fig:muti-ind6} also demonstrates our relaxation based method achieves the global optimum when $\eta=1$ as the it leads the sixth regression coefficient to zero and other regression coefficients to be unchanged.  

\subsection{Rank-one attack}
\begin{figure*}
\begin{minipage}[b]{.3\linewidth}
\centering 
\centerline{
\includegraphics[width=\linewidth]{./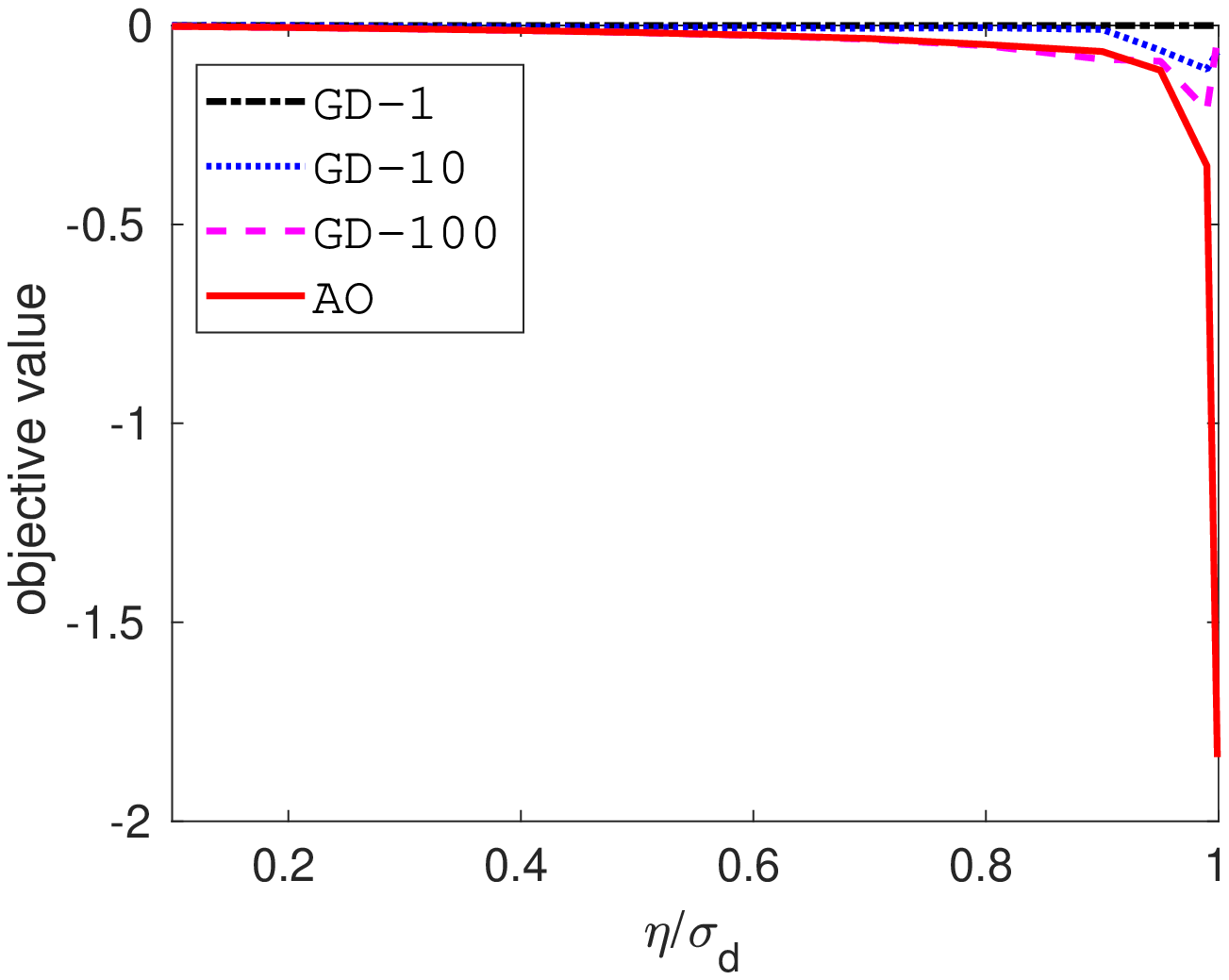}
}
\centerline{(a)}\medskip 
\end{minipage}
\hfill 
\begin{minipage}[b]{.3\linewidth}
\centering 
\centerline{
\includegraphics[width=\linewidth]{./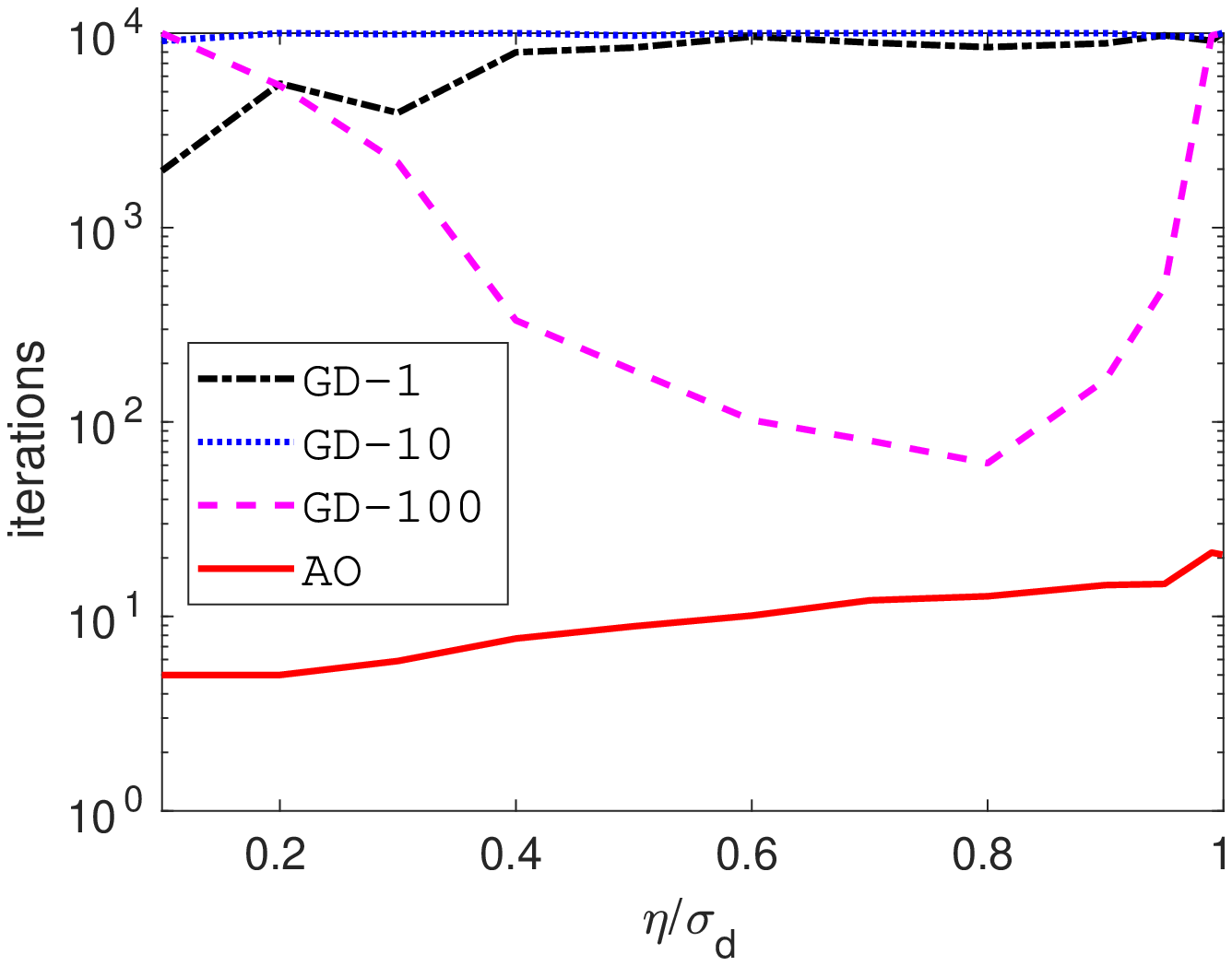}
}
\centerline{(b)}\medskip 
\end{minipage}   
\hfill 
\begin{minipage}[b]{.3\linewidth}
\centering 
\centerline{
\includegraphics[width=\linewidth]{./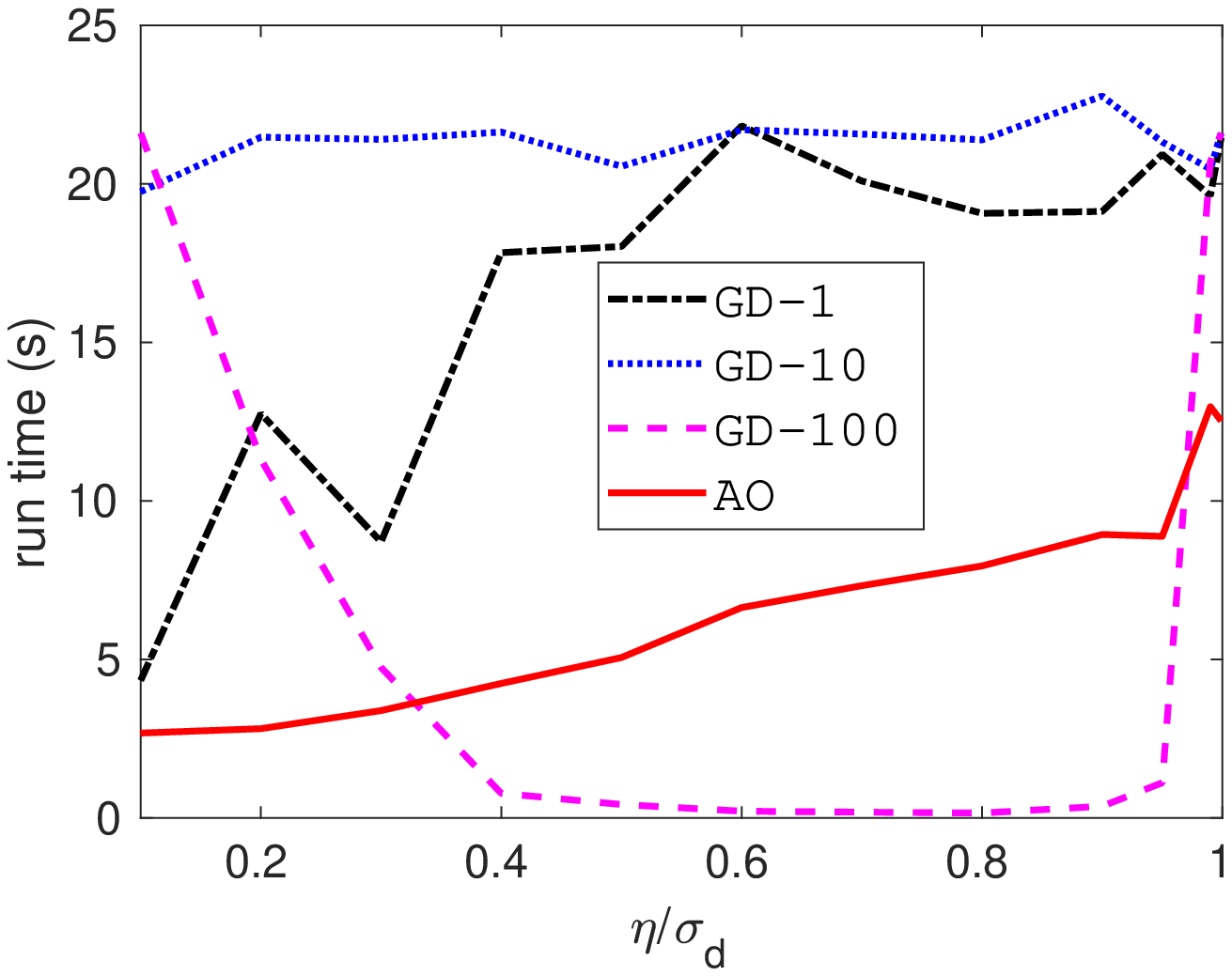}
}
\centerline{(c)}\medskip 
\end{minipage} 
\caption{Comparison of the projected gradient descent with different stepsizes and the proposed alternating optimization method. (a) shows the average objective values of different algorithms under different energy budget. (b) demonstrates the average iterations that reach the convergence condition and (c) illustrates the average run times of each algorithm.}
\label{fig:r1-values}
\end{figure*}

\begin{figure*}[t]
\begin{minipage}[b]{.3\linewidth}
\centering 
\centerline{
\includegraphics[width=\linewidth]{./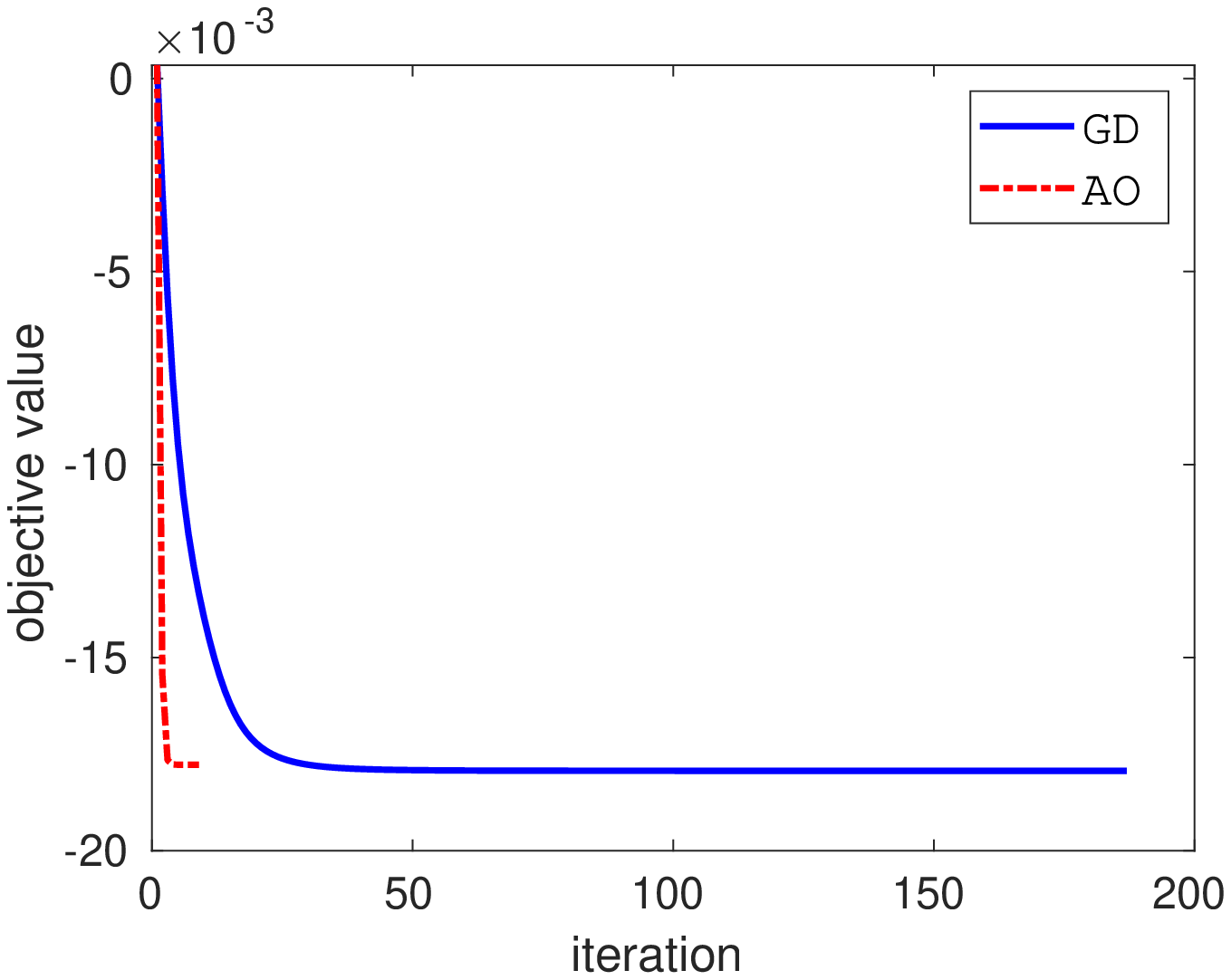}
}
\centerline{(a)}\medskip 
\end{minipage}
\hfill 
\begin{minipage}[b]{.3\linewidth}
\centering 
\centerline{
\includegraphics[width=\linewidth]{./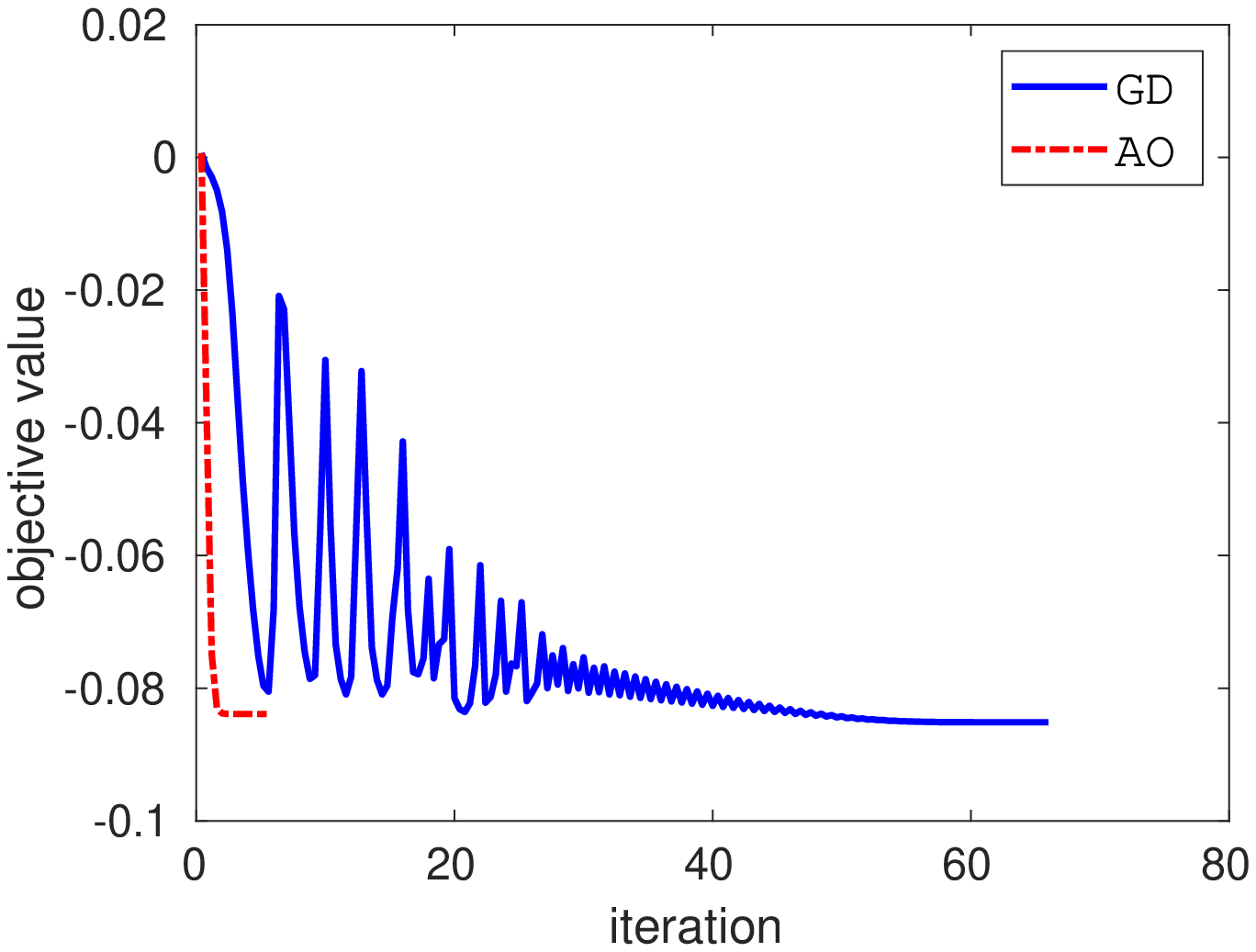}
}
\centerline{(b)}\medskip 
\end{minipage}   
\hfill 
\begin{minipage}[b]{.3\linewidth}
\centering 
\centerline{
\includegraphics[width=\linewidth]{./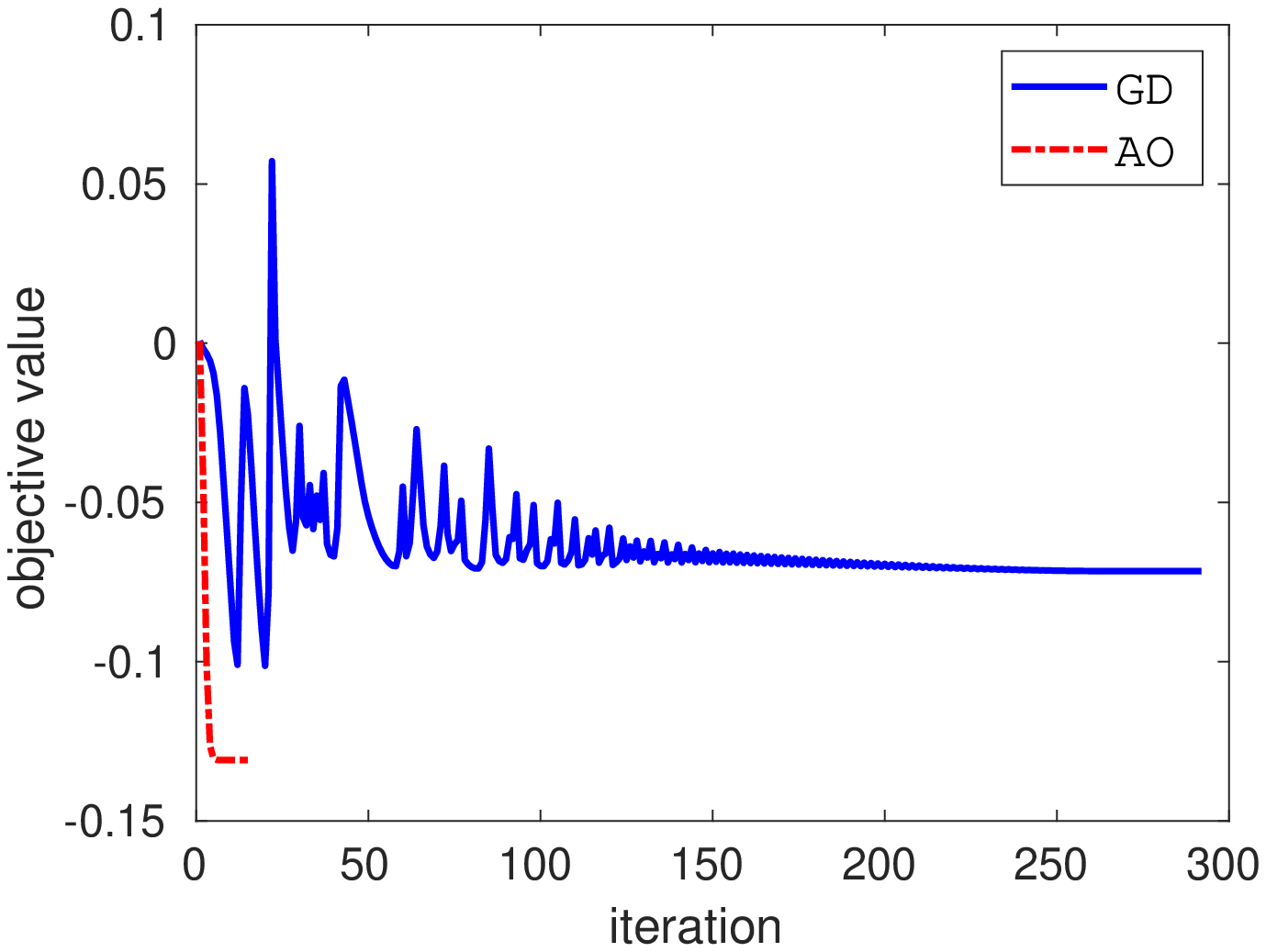}
}
\centerline{(c)}\medskip 
\end{minipage} 
\caption{These figures show the evolution of function values as the iteration increases with one typical run of projected gradient descent and alternating optimization algorithm, where (a) is with $\eta/\sigma_m = 0.5$, (b) is with $\eta/\sigma_m = 0.9$, and (c) is with $\eta/\sigma_m = 0.95$ and $\sigma_m$ is the smallest singular value of the original feature matrix.}
\label{fig:r1-conv}
\end{figure*}

In this subsection, we carry out different rank-one attack strategies. Our goal is to minimize the magnitude of the fourth regression coefficient with objective~\eqref{prob-min}. We compare two strategies: the projected gradient descent based strategy discussed in Section~\ref{sec:addone} and our proposed alternating optimization based strategy. 
For the projected gradient descent based strategies, we use different step sizes: $1/(1+t)$, $10/(1+t)$, and $100/(1+t)$. As our analysis shows, when the energy budget is larger than the smallest singular value, our objective can be minus infinity. Hence, in our experiment, we vary the energy budget from $0$ to the smallest singular value. Given certain energy budget, we set all the algorithms with the same randomly initialized point and run these algorithms until they stop with the same convergence condition that is two consecutive function values change too small or it reaches the maximal allowable iterations. We repeat this process $100$ times and record their average objective values, iterations, and run times when they converge. 

Fig.~\ref{fig:r1-values} shows the average objective values, the number of iterations, and the run time of the four algorithms, where GD-1, GD-10 and GD-100 stand for the projected gradient descent with stepsize $1/(1+t)$, $10/(1+t)$, and $100/(1+t)$, respectively, and AO denotes the proposed alternating optimization method. As Fig.~\ref{fig:r1-values} shows, when the energy budget increases, the objectives decrease for both of these algorithms. Furthermore, the proposed alternating optimization based algorithm provides much smaller objective values, especially when the energy budget approaches the smallest singular value. As the energy budget increases, the total iterations and run times of alternating optimization, GD-1, and GD-10 increase. One notable thing is that the proposed alternating optimization based algorithm converges several orders of magnitude faster than the projected gradient descent based algorithms. However, as the energy budget increases, the iterations and run times of GD-100 first decrease and then increase. This is due to the fact that a larger stepsize will result in a faster convergence rate while it may cause oscillation. This phenomena can be observed in Fig.~\ref{fig:r1-conv}, where it depicts the evolution of the objective values of AO and GD-100 with the energy budget being $\eta/\sigma_m=0.5$, $\eta/\sigma_m=0.9$ and $\eta/\sigma_m=0.95$, respectively, and $\sigma_m$ is the smallest singular value of the original feature matrix. From this figure we can see as the energy budget increases, the alternating optimization based algorithm converges very fast while the projected gradient descent based algorithm becomes unstable when the energy budget is large. 

\section{Conclusion}\label{sec:conclusion}
In this paper, we have investigated the adversarial robustness of linear regression problems. Particularly, we have given the closed-form solution when we attack one specific regression coefficient with limited energy budget. Furthermore, we have considered a more complex objective where we attack one of the regression coefficient while trying to keep the rest of regression coefficients to be unchanged. We have formulated this problem as a multivariate polynomial optimization problem and introduced the semidefinite relaxation method to solve it. Finally, we have studied a more powerful adversary who can make rank-one modification on the feature matrix. To take the advantage of the rank-one structure, we have proposed an alternating optimization algorithm to solve this problem. The numerical examples demonstrated that our proposed closed-form solution and the semidefinite relaxation based strategies can find the global optimal solutions and the alternating optimization based strategy provides better solutions, faster convergence, and more stable behavior compared to the projected gradient descent based strategy. 

In terms of future work, it is of interest to study how to design multiple adversarial data points and how to efficiently design the modification matrix without the rand-one constraint. 
\appendices
\section{Lasserre’s relaxation method for solving~\eqref{opt:s2final}}\label{app:poly}
In this appendix, we briefly introduce Lasserre's relaxation method and use this method to solve problem~\eqref{opt:s2final}. Lasserre's relaxation method is dedicated to solve multivariate polynomial optimization problem.   
A general multivariate polynomial optimization problem contains a multivariate polynomial objective function, $p(\mathbf{x}): \mathbb{R}^n\rightarrow \mathbb{R}$, and some constraints defined by polynomial inequalities, $g_i(\mathbf{x}) \ge 0$, $i=1,2,\ldots,r$:
\begin{align}\label{opt:poly-origi}
    \min:& \quad p(\mathbf{x}) \\
    \text{s.t.}& \quad g_i(\mathbf{x}) \ge 0,\, i=1,2,\ldots,r. 
\end{align}
Clearly, our optimization problem~\eqref{opt:s2final} can be viewed a multivariate polynomial optimization problem, since in \eqref{opt:s2final} the objective function is a fourth order multivariate polynomial and the constraint is a quadratic polynomial. 

To proceed, let us explain more details about the problem. The polynomial in the objective, $p(\mathbf{x})$, can be written as:
\begin{align}
    p(\mathbf{x}) = \sum_{\alpha}p_\alpha \mathbf{x}^{\alpha},
\end{align}
where $\alpha \in \mathbb{N}^n$, 
\begin{align}
    \mathbf{x}^\alpha = \prod_{i=1}^n x_i^{\alpha^i},
\end{align}
and $|\alpha| = \sum_i \alpha^i$. Suppose the order of the objective function is $m_0$, we have $|\alpha|\le m_0$. Define $\mathbf{p}_\alpha =\{ p_\alpha \} \in \mathbb{R}^{s(m_0)}$ as the coefficients of the polynomial basis
$\{1, x_1, x_2, \ldots, x_n, x_1^2, x_1x_2,\ldots, x_n^{m_0}\}$.
Hence, the dimension of the basis is $s(m_0) = \binom{n+m_0}{m_0}$. Instead of directly solving problem~\eqref{opt:poly-origi}, Lasserre's relaxation method \cite{lasserre2001global} first converts it into the following equivalent problem
\begin{align}\label{opt:poly-prob}
    \min_{\mu \in \mathcal{P}(\mathcal{K})}: \quad \int p(\mathbf{x})\, \mathrm{d}(\mu(\mathbf{x})), 
\end{align}
where $\mathcal{K}$ is the semialgebraic set defined by the inequalities:
$\mathcal{K}=\{ \mathbf{x}\,|\,g_i(\mathbf{x})\ge 0, \, i=1,2,\ldots, r \},$
and $\mathcal{P}(\mathcal{K})$ is the set of all probability measures supported on $\mathcal{K}$. 

To see that problem~\eqref{opt:poly-origi} and~\eqref{opt:poly-prob} are equivalent, suppose the optimal values of ~\eqref{opt:poly-origi} and ~\eqref{opt:poly-prob} are $p_0^*$ and $p^*$, respectively. Since $p(\mathbf{x})  \ge p^*_0$, we have $p^* \ge p^*_0$. Conversely, suppose the optimal solution of ~\eqref{opt:poly-origi} is $x^*$, $\mu = \delta_{x^*}$ is a feasible solution to~\eqref{opt:poly-prob}. Hence, we also have $p^* \le p^*_0$. Thus, the two problems are equivalent. 

With the help of this reformulation, finding the global optimal points for~\eqref{opt:poly-origi} is equivalent to finding the optimal distribution of~\eqref{opt:poly-prob}. Since $\int p(\mathbf{x})\, \mathrm{d}\mu(\mathbf{x}) = \sum_{\alpha}p_\alpha \int \mathbf{x}^\alpha\, \mathrm{d}\mu(\mathbf{x})$, the objective function of ~\eqref{opt:poly-prob} is just $\mathbf{p}_\alpha^\top \mathbf{y}_\alpha$, where $\mathbf{y}_\alpha =\{y_\alpha\}$ and  $y_\alpha = \int \mathbf{x}^\alpha \, \mathrm{d}\mu(\mathbf{x})$. So, finding the optimal probability is identical to finding the optimal $\mathbf{y}_\alpha$ under the constraint that $\mathbf{y}_\alpha$ is a valid moment sequence with respect to some probability measure on $\mathcal{K}$. The solution to this problem is fully characterized by the K-moment problem in case $\mathcal{K}$ is compact. Let us give more notations for the convenience of introducing this method. 

Given an $s(2m)$ length vector, $\mathbf{y}_\alpha = \{ y_\alpha \}$, with its first element $y_{0,\ldots,0}=1$. The $s(m)$ dimensional moment matrix $M_m(y)$ is constructed as follows: the first row and columns is defined as $M_m(1,k) = y_{\alpha_k}$ and ${M}_m(k,1) = y_{\alpha_k}$ for $k = 1,2,\ldots,s(m)$ and $M_m(i,j) = y_{\alpha_i+\alpha_j}$ for $i, j = 2,\ldots, s(m)$. For instance, when $n=2, m=2$, 
\begin{align*}
    M_m(y) =
    \begin{bmatrix}
    1      & y_{10} & y_{01} & y_{20} & y_{11} & y_{02} \\
    y_{10} & y_{20} & y_{11} & y_{30} & y_{21} & y_{12} \\
    y_{01} & y_{11} & y_{02} & y_{21} & y_{12} & y_{03} \\
    y_{20} & y_{30} & y_{21} & y_{40} & y_{31} & y_{22} \\
    y_{11} & y_{21} & y_{12} & y_{31} & y_{22} & y_{13} \\
    y_{02} & y_{12} & y_{03} & y_{22} & y_{13} & y_{04}
    \end{bmatrix}.
\end{align*}
Moreover, $M_m(y)$ defines a bi-linear form, $\langle \cdot\,, \cdot\rangle$, on two polynomials
\begin{align*}
    \langle p,q \rangle_{y} = \langle p, M_m(y)q \rangle 
    =\sum_\alpha (pq)_\alpha y_\alpha 
    =\int p(\mathbf{x})q(\mathbf{x})\,\mathrm{d}\mu(\mathbf{x}).
\end{align*}
So, if $\mathbf{y}_\alpha$ is a sequence of moments of some probability measure, we have 
\begin{align*}
    \langle q, q \rangle_y = \int q(x)^2 \,\mathrm{d}(\mu(\mathbf{x})) \ge 0.
\end{align*}
Thus, we have $M_m(y) \succcurlyeq 0$. 
Let $p(\mathbf{x}) $ be a multivariate polynomial with coefficient vector $\mathbf{p}_\beta =\{p_\beta\}$, and define the localizing matrix $M_m(py)$ as 
$$M_m(py)(i,j) = \sum_\beta p_\beta y_{\alpha_i+\alpha_j + \beta}.$$
For example, with 
\begin{align*}
    M_1(y) = 
    \begin{bmatrix}
    1      & y_{10} & y_{01} \\
    y_{10} & y_{20} & y_{11} \\
    y_{01} & y_{11} & y_{02}
    \end{bmatrix}
    \quad \text{and} \quad 
    p(\mathbf{x}) = a - x_1^2 - x_2^2, 
\end{align*}
we have 
\begin{align*}
    &M_1(py) =\\
    &\begin{bmatrix}
    a - y_{20} - y_{02}       & ay_{10} - y_{30} - y_{12} & ay_{01} - y_{21} - y_{03} \\
    ay_{10} - y_{30} - y_{12} & ay_{20} - y_{40} - y_{22} & ay_{11} - y_{31} - y_{13} \\
    ay_{01} - y_{21} - y_{03} & ay_{11} - y_{31} - y_{13} & ay_{01} - y_{22} - y_{04}
    \end{bmatrix}.
\end{align*}
Also, if $p(\mathbf{x})\ge 0$, by definition, we have $M_m(py) \succcurlyeq 0$. 


Further, we make the following assumption on the semialgebraic set $\mathcal{K}$. 
\begin{assumption} \label{assp:archimedean}
  The set $\mathcal{K}$ is compact and there exists a real-valued polynomial $u(\mathbf{x})$: $\mathbb{R}^n \rightarrow \mathbb{R}$ such that $\{u(\mathbf{x})\ge 0\}$ is compact and 
  \begin{align} \label{eq:assp}
  u(\mathbf{x}) = u_0(\mathbf{x})+ \sum_{k=1}^r g_i(\mathbf{x})u_i(\mathbf{x}) \ for \  all \ \mathbf{x} \in \mathbb{R}^n,
  \end{align}
  where the polynomial $u_i(\mathbf{x})$ is the sum of squares for $i=0,1,\ldots,r$.
\end{assumption}

Assumption~\ref{assp:archimedean} is satisfied in many cases. For example, this assumption is satisfied when there is only one inequality constraint that is compact, which is the case in our problem~\eqref{opt:s2final}. 

With the help of the notations and Assumption~\ref{assp:archimedean}, we have the main result. Let $w_i = \left\lceil m_i/2 \right\rceil$, where $m_i, i=1,2,\ldots, r,$ is the order of $g_i(\mathbf{x})$ and $m_0$ is the order of the objective, with $N \ge \max\{w_i\}$ for $i=0,1,\ldots,r$. Consider the following semidefinite programming
\begin{align}\label{opt:sdp}
    \min:&  \quad \sum_\alpha p_\alpha y_\alpha \\ \nonumber
    \text{s.t.}&  \quad M_N(y) \succcurlyeq 0, \\ \nonumber
    &\quad  M_{N-w_i}(g_iy) \succcurlyeq 0,\, i = 1,2,\ldots,r,
\end{align}
where $N$ is called the relaxation order. Lasserre \cite{lasserre2001global} shows that as $N$ approaches infinity, the solution of ~\eqref{opt:sdp} converges to the solution of \eqref{opt:poly-prob}. However, the dimension of the semidefinite programming~\eqref{opt:sdp} grows rapidly as $N$ increases and infinite $N$ makes solving problem~\eqref{opt:sdp} infeasible. Fortunately, in practice, a small $N$ is enough to get a very good approximation of problem~\eqref{opt:poly-prob} \cite{lasserre2001global}. Furthermore, a small $N$ is usually sufficient to get the global optimal solutions and the sufficient rank condition,
$\text{rank} M_{N}(y) = \text{rank} M_{N-w_{\max}}(y)$,
where $w_{\max} = \max \{w_i\}, i=0,1,\ldots,r$, assures the global optimality. Therefore, after we solving problem~\eqref{opt:sdp} we are ready to check whether we reach the global optimality. Besides, Henrion and Lasserre developed a systematic way to extract all the optimal solutions in case the rank condition is satisfied \cite{henrion2005detecting}. Since our problem~\eqref{opt:s2final} is just a special case of multivariate polynomial optimization, with the help of this relaxation method, we can solve problem~\eqref{opt:s2final}. 

\section{The Projected Gradient Descent Algorithm}\label{appx:pgd}
In this section, we describe the general projected gradient descent algorithm. Projected gradient descent algorithm is a way to solve the constrained optimization problem. So, it can be used to solve our problem as~\eqref{opt:s2final} and~\eqref{opt:r1-sim2} are both constrained optimization problems. 
At each step of projected gradient descent, we first compute the gradient of the objective function, move forward to the negative gradient direction, and then project onto the feasible set. We continue this process until meet the convergence criteria. When performing the projected gradient descent algorithm, we need to pay close attention to the stepsize parameter. Appropriate constant stepsize will lead to convergent algorithm if the objective is Lipschitz convex. However, diminishing stepsize is more proper for non-convex objective functions \cite{parikh2014proximal}. We summarize the general projected gradient descent method in Algorithm~\ref{alg:pgd}. 

\begin{algorithm}[t]
\caption{The Projected Gradient Descent Algorithm}\label{alg:pgd}
\begin{algorithmic}[1]
\State \textbf{Input}: objective function $f(\mathbf{x})$, feasible set $\mathcal{C}$, and stepsize parameter $\alpha_t$.
\State \textbf{Initialize}: randomly initialize $\mathbf{x}_0 \in \mathcal{C}$ and set the number of iterations $t=0$. 

\State \textbf{Do}
\State compute the gradient: $\nabla f(\mathbf{x}_t)$,

\State update:
$\mathbf{y}_t = \mathbf{x}_t - \alpha_t \nabla f(\mathbf{x}_t),$

\State project $\mathbf{y}_t$ onto the feasible set:
$\mathbf{x}_{t+1} = \underset{\mathbf{x}\in \mathcal{C}}{\text{argmin}}:\,\|\mathbf{x} - \mathbf{y}_t \|.$

\State set $t = t+1$,
\State \textbf{While} convergence conditions are not meet.
\State \textbf{Output}: $\mathbf{x}_t$.
\end{algorithmic}
\end{algorithm}

\bibliographystyle{./format/IEEEtran}
\bibliography{./format/IEEEabrv,mybib}
\end{document}